\documentclass[12pt, a4paper]{article}
\usepackage[utf8]{inputenc}

\usepackage[utf8]{inputenc}
\usepackage[margin=1in]{geometry}
\usepackage{mathtools}
\usepackage{amsmath}
\usepackage{amsthm}
\usepackage{amssymb}
\usepackage{setspace}
\usepackage{booktabs}
\usepackage{tabularx}
\usepackage{amsmath} 
\usepackage{amssymb}
\usepackage{bm}
\usepackage{ascmac}
\usepackage{setspace}
\usepackage[at]{easylist}
\usepackage[normalem]{ulem}
\usepackage{comment}

\mathtoolsset{showonlyrefs}  

\makeatletter

\usepackage{amsthm}
\usepackage{amsfonts}
\usepackage{bbm}
\usepackage{graphics}
\usepackage{graphicx}
\usepackage{enumerate}
\usepackage{float}
\usepackage{caption}
\usepackage{subcaption}
\usepackage{natbib}

\theoremstyle{plain}
\newtheorem{theorem}{Theorem}

\theoremstyle{remark}

\usepackage{xspace}

\newcommand{\alg}[1]{\textbf{#1}\xspace}
\newcommand{\onesided}{\alg{One-sided}}
\newcommand{\DA}{\alg{DA}}
\newcommand{\ECDA}{\alg{ECDA}}

\usepackage[hyphens]{url}
\usepackage[colorlinks,urlcolor=blue, citecolor=blue, menucolor=blue]{hyperref}

\theoremstyle{plain}

\usepackage{algorithm,algpseudocode}

\usepackage{amsmath}

\allowdisplaybreaks

\makeatother
\title{Integrating Predictive Models into Two-Sided Recommendations: A Matching-Theoretic Approach\footnote{This work has been supported by JST ERATO Grant Number JPMJER2301, Japan, MiDATA Co., Ltd., and LINKBAL Co., Ltd. The study protocol was approved by the Institutional Review Board of the University of Tokyo (IRB No.~E2025ALS233) and pre-registered in the AEA RCT Registry (AEARCTR-0015446).}}
\author{
  Kazuki Sekiya\thanks{\href{mailto:kazuki-sekiya@g.ecc.u-tokyo.ac.jp}{kazuki-sekiya@g.ecc.u-tokyo.ac.jp}, Graduate School of Economics, The University of Tokyo} \and
  Suguru Otani\thanks{\href{mailto:suguru.otani@e.u-tokyo.ac.jp}{suguru.otani@e.u-tokyo.ac.jp}, Graduate School of Economics, The University of Tokyo} \and
  Yuki Komatsu\thanks{\href{mailto:komatsu-yuki286@g.ecc.u-tokyo.ac.jp}{komatsu-yuki286@g.ecc.u-tokyo.ac.jp}, Department of Economics, The University of Tokyo} \and
  Sachio Ohkawa\thanks{\href{mailto:sachio.okawa@midata.co.jp}{sachio.okawa@midata.co.jp}, MiDATA Co., Ltd.} \and
  Shunya Noda\thanks{\href{mailto:shunya.noda@e.u-tokyo.ac.jp}{shunya.noda@e.u-tokyo.ac.jp}, Graduate School of Economics, The University of Tokyo}
}

\date{\today}

\begin{document}

\maketitle

\begin{abstract}
    Two-sided platforms must recommend users to users, where matches (termed \emph{dates} in this paper) require mutual interest and activity on both sides. Naive ranking by predicted dating probabilities concentrates exposure on a small subset of highly responsive users, generating congestion and overstating efficiency. We model recommendation as a many-to-many matching problem and design integrators that map predicted login, like, and reciprocation probabilities into recommendations under attention constraints. We introduce \emph{effective dates}, a congestion-adjusted metric that discounts matches involving overloaded receivers. We then propose \emph{exposure-constrained deferred acceptance} (ECDA), which limits receiver exposure in terms of expected likes or dates rather than headcount. Using production-grade predictions from a large Japanese dating platform, we show in calibrated simulations that ECDA increases effective dates and receiver-side dating probability despite reducing total dates. A large-scale regional field experiment confirms these effects in practice, indicating that exposure control improves equity and early-stage matching efficiency without harming downstream engagement.
\end{abstract}

\vspace{0.4cm}

\noindent \textbf{Keywords:} two-sided platform, matching theory, recommender system, machine learning, deferred acceptance, algorithmic market design

\section{Introduction}

\emph{Two-sided platforms} mediate matchings between distinct groups of users, such as men and women on online dating services \citep[e.g.,][]{hitsch2010matching} and marriage markets \citep{otani2025marriage}, employers and job seekers on job platforms \citep{horton2017effects}, or clients and freelancers in online marketplaces \citep{horton2010online,kassi2018online,kanayama2024nonparametric}. These platforms have become ubiquitous in the modern economy and often host numerous participants. Thus, enabling users to efficiently identify promising counterparts has become a central design challenge. Recommender systems play a crucial role in this process, as they shape users' search behavior and ultimately determine how effectively matches are formed.

In two-sided platforms, however, recommendation differs fundamentally from the setting studied in conventional recommender system literature. The objects being recommended are users, rather than passive items such as products or content. Predicting the preferences of the user receiving recommendations, whom we call the \emph{proposer}, and displaying the most preferred candidates among potential counterparts, whom we call \emph{receivers}, is generally not sufficient to generate successful matches. Match formation requires mutual interest: receivers must also find proposers attractive, and both sides must be active on the platform at the relevant time. Therefore, an effective two-sided recommendation requires integrating predictions of multiple dimensions of user behavior, including preferences, responsiveness, and activity, into a recommendation. The design of such an \emph{integrator} is a key component of two-sided recommender systems.

A growing literature points out that naively recommending based on predicted preferences can lead to severe congestion, where a small subset of receivers is repeatedly recommended to many proposers. Congestion can generate inequality across users, reduce overall match or employment rates, and deteriorate user experience on the platform. Addressing these concerns requires integrators that explicitly control exposure and limit excessive concentration of recommendations. Despite their practical importance, several fundamental questions remain underexplored. How should upper bounds on recommendations be imposed to mitigate congestion while preserving match efficiency? Which performance metrics are most informative for evaluating integrators in two-sided markets? How do different integrator designs perform in real-world platform environments? 

In this paper, we study a data-driven framework to achieve efficiency and equity in two-sided recommendation by integrating machine learning-based predictive models with insights from matching theory. We design integrators as mechanisms that take predictions of individual user behaviors as input and generate recommendations as matchings between users. We propose a novel performance metric to evaluate the quality of each expected match, taking into account the receiver-side congestion induced by recommendations. We then validate the performance of the proposed recommenders through a numerical simulation calibrated by a real-world dataset provided by \emph{CoupLink}, a Japanese online dating service with over 1.5 million cumulative users. We further implement the integrator identified as the best-performing method in our simulations on CoupLink and conduct a field experiment to evaluate its performance in the real world.

The process through which matches are formed on CoupLink is entirely standard and closely resembles that observed in many other online dating services and job platforms. A proposer (primarily male) can send a \emph{like} to another receiver (primarily female). If the receiver, upon being notified, reciprocates with a \emph{relike}, a match (termed a \emph{date}) is formed, enabling communication on the platform. The platform predicts the probability of a proposer/receiver liking a receiver/proposer (called the \emph{like rate}) and a user using the service (called the \emph{login rate}).

The core design principle of CoupLink's current integrator is as follows. For each proposer--receiver pair, the platform can compute a \emph{dating rate} by multiplying the proposer's login rate, the proposer's like rate, the receiver's login rate, and the receiver's relike rate. Based on these dating rates, the integrator recommends to each proposer a fixed number of receivers with the highest values. This approach is common in conventional (one-sided) recommender problems. However, it raises serious concerns in two-sided settings, as a small subset of receivers who are likely to be liked by many proposers, log in frequently, and readily return likes tend to be recommended repeatedly, thereby generating substantial congestion on the receiver side.

Even when the same predictive models are taken as given, there is substantial flexibility in how the predictions are integrated into a recommendation. For instance, instead of generating the rank-order list (ROL) over proposer--receiver pairs by predicted dating rates, one could sort them using a more myopic conversion metric, such as like rates. To prevent excessive concentration of recommendations on top receivers, the platform may also impose upper bounds on how often a given receiver can be recommended to proposers. Doing so introduces constraints on both sides of the market: the number of receivers that can be displayed to a single proposer and the number of proposers to whom a single receiver can be recommended. As a result, the recommendation problem naturally takes the form of a many-to-many matching problem. While it is natural to consider adopting the deferred acceptance algorithm (\DA) as a way to determine such matchings, the stability and proposer-side strategy-proofness it delivers is not, by itself, essential for determining recommend-match outcomes, and alternative mechanisms warrant consideration. In particular, although \DA typically imposes capacity constraints based on the number of proposers a receiver can match with, there is no clear justification for restricting attention to headcount-based capacity limits as the appropriate form of constraint.

We make four key findings to enhance two-sided recommendations. First, we demonstrate that using the dating rate to generate ROL input for \DA significantly outperforms the use of the like rate. Second, we introduce a new metric, the \emph{effective date}, which explicitly accounts for receiver-side congestion. Specifically, to compute the number of effective dates, we assume that, if a receiver forms dates with multiple proposers, the receiver selects one of them uniformly at random to develop a deeper relationship. We refer only to such dates as effective dates, and this definition discounts the value of dates with congested receivers. Using this metric, we show that conventional approaches that greedily recommend receivers with high dating rates (\onesided) are no longer effective because they do not take into account the cost of congestion. Third, while effective dates can be improved by adopting integrators that mitigate receiver-side congestion, such as deferred acceptance (\DA) with an appropriate parameter setting, we show that further improvements are achieved by a new method we term \emph{exposure-constrained deferred acceptance} (\ECDA), in which receiver capacity is defined based on the expected number of likes or dates. \ECDA is computationally fast and is practical for large-scale dating platforms such as CoupLink. Fourth, we implement on CoupLink the \ECDA with the configuration that exhibits the highest performance in our calibrated simulations and evaluate its effectiveness in practice. We adopt a difference-in-differences design, using Kanto, the most populous region in Japan, as the treatment area, and Kansai and Tokai, the next most populous regions, as control areas. The experimental results show that, consistent with the simulation findings, the introduction of \ECDA leads to a more equitable distribution of likes and dates and increases the number of effective dates compared with the existing recommender, which is based on \onesided, whereas \ECDA achieves fewer \emph{predicted} dates at the recommendation stage than \onesided.

Our difference-in-differences estimates reveal that expected outcomes at the recommendation stage move in the direction predicted by the simulation and are statistically significant. For realized outcomes observed within two weeks after recommendation, the treatment increases the volume of likes on both proposer and receiver sides, indicating improved upstream engagement. When all receivers are included, however, the effect on realized dating probability is small and statistically indistinguishable from zero, and post-engagement outcomes (messaging) move in the opposite direction. This aggregate pattern is driven by the extreme right tail of the receiver distribution: the top 0.1\% of receiver-day observations were disproportionately attracting recommendations and likes despite exhibiting very poor pass-through to post-engagement communication, consistent with severe congestion. By reallocating exposure away from this tail, \ECDA reduces the flow of low-conversion likes accruing to the receivers, which mechanically attenuates average effects on late-stage outcomes when they are pooled with everyone else. Importantly, once we exclude the top 0.1\% receiver-days, the results become coherent: the treatment shows statistically significant positive effects on average effective dates, dating probabilities, and average likes, while the coefficient on average dates is positive but not statistically significant for the remaining 99.9\% of users. Taken together, the evidence suggests that \ECDA improves early-stage matching efficiency for the vast majority of users by correcting congestion at the top, while receivers with exceptionally high relike rates lose attention that was unlikely to translate into substantive communication in the first place.

\subsection{Related Literature}

Recommender systems have been extensively studied in machine learning and data mining, predominantly in one-sided environments where platforms rank passive items for users. This literature provides both theoretical foundations for optimal ranking policies and empirical evidence on the impact of recommendations on user outcomes \citep[e.g.,][]{compiani2024online,horton2017effects}.

Work on two-sided (reciprocal) recommenders is more recent and comparatively smaller.\footnote{For example, \citet{palomares2021reciprocal}, the survey paper of the reciprocal recommender systems (RPS), states that the RPS ``\emph{arose in the last decade,}'' and ``\emph{to our knowledge there are no exhaustive literature analyses on state-of-art RPS developments and their growing range of applications ever since.}''} Early contributions explicitly incorporated two-sidedness through the scoring rule \citep{pizzato2010recon}. Relatedly, the economics and operations research literature studies two-sided platforms and matching markets, including design features that restrict search or information \citep{kanoria2021facilitating}, competitive platform strategy \citep{halaburda2018competing}, extensions incorporating bilateral preferences \citep{shi2023optimal}, and assortment/priority design in two-sided environments \citep{aouad2023online,shi2022optimal,ashlagi2022assortment}. Our focus is complementary: taking modern prediction systems as given, we study how to design practical integrators that map multi-dimensional behavioral predictions into recommendations.

Congestion is a central concern in two-sided recommendation as it causes inefficiency and inequality. Congestion effects are also present in one-sided settings (e.g., concentrated applications or attention), motivating designs that penalize excessive concentration or manage demand \citep{naya2021designing,manshadi2023redesigning,horton2024reducing}. In online dating contexts, several recent papers incorporate matching-theoretic structure to improve equity: \citet{chen2023reducing} combine regression-based prediction with a transferable-utility matching framework \citep{choo2006marries,galichon2022cupid}; \citet{tomita2022matching} evaluate matching-based recommenders; \citet{su2022optimizing} rank pairs by social-welfare criteria; and \citet{tomita2024fairreciprocalrecommendationmatching} pursue fairness via Nash social welfare guarantees. While the literature has analyzed the harmful effects of excessive congestion, existing studies have not proposed methods to quantify its impact on inefficiency. Importantly, our setting highlights that congestion is not merely a distributional concern at early stages: post-engagement data suggest that a small set of highly responsive users disproportionately attracts recommendations and likes, yet exhibits extremely low pass-through to substantive communication (e.g., message exchanges), implying that excessive concentration can directly drive late-stage inefficiency. This observation motivates us to introduce effective dates, a novel outcome-based metric that explicitly incorporates congestion into the evaluation of match efficiency. Building on this metric, we study the design of integrators that improve effective dates and validate their performance through a large-scale field experiment.

Three additional aspects distinguish our approach. First, we model the multi-stage actions required for a date---activity, liking, and responsiveness---and integrate predicted login, like, and relike probabilities into the construction of rank-order lists and matching-based integrators. While many mechanism-design-inspired approaches proxy preferences using like probabilities alone, incorporating activity and responsiveness is crucial in practice; \citet{rios2023improving} is a notable exception that combines login probabilities and history-dependent effects within an assortment-optimization framework. Second, we treat receiver-side capacity not as an exogenous headcount constraint but as a design choice tied to exposure (e.g., expected likes or dates), which connects to work on congestion management and capacity adjustment in matching markets \citep{arnosti2021managing,manshadi2022online}. Finally, because scalability is essential for production systems, algorithmic efficiency is a first-order consideration; parallel and accelerated methods for stable matching reflect this concern \citep{nakada2024parallel}. Our integrators leverage the structure of dating-rate sorting to admit fast greedy implementations, enabling large-scale simulation and deployment on a real dating platform.

\section{Model}\label{sec: model}

\subsection{Setup}

We consider a two-sided platform in which $I$ \emph{proposers} (e.g., males) can potentially match with $J$ \emph{receivers} (e.g., females) and develop a recommender system that determines which set of receivers should be recommended to each proposer. We denote the set of proposers and receivers by $[I] = \{1, \dots, I\}$ and $[J] = \{1, \dots, J\}$, respectively. We represent a recommendation by a \emph{recommendation matrix}, $M = (M_{ij})_{i \in [I], j \in [J]}$, where $M_{ij} \in [0, 1]$ is the probability that receiver $j$ is recommended to proposer $i$.\footnote{Using Birkhoff's algorithm \citep{birkhoff1946tres}, the probabilistic recommendation defined here can always be decomposed into a deterministic recommendation (i.e., $M_{ij} \in \{0, 1\}$ for all $(i, j)$) that satisfies the cognitive capacity constraint.} 

While one might think that recommending more receivers to a proposer weakly increases the possibility of a suitable match, this approach can lead to the proposer's cognitive overload. The overload not only diminishes user satisfaction but also raises the probability of the proposer discarding the recommendations altogether. To limit the cognitive burden on proposers, we set proposer $i$'s \emph{cognitive capacity} $c_i \in \mathbb{Z}_{++}$ on the number of receivers recommended to each proposer $i \in [I]$. The recommendation matrix must satisfy the constraint $\sum_{j \in [J]} M_{ij} \le c_i$ for all $i \in [I]$. This constraint is treated as a hard constraint that all candidate integrators must satisfy, rather than being a subject for integrator design. In other words, we assume that each proposer $i$ reviews recommendations for up to $c_i$ receivers, and ignores any additional recommended receivers beyond this limit.\footnote{This assumption aligns with the framework of the position-based model \citep[see][for example]{chuklin2022click} commonly employed in recommender system design. In this model, the probability of a user examining an item decreases with its position in the recommendation list. Specifically, we assume that the proposer examines the $c_i$-th position with probability one, while the $(c_i + 1)$-th position with probability zero.}

We refer to the act of a proposer and a receiver mutually sending likes and forming a match as a \emph{date}, to distinguish it from the ``match'' as a recommendation decision made by a matching mechanism like \DA. As illustrated in Figure~\ref{fg:chart}, for a date to occur between proposer $i$ and receiver $j$, the following steps must all be satisfied: (i) the platform recommends receiver $j$ to proposer $i$; (ii) proposer $i$ logs into the platform; (iii) proposer $i$ sends a \emph{like} to receiver $j$; (iv) receiver $j$ logs into the platform; (v) receiver $j$, upon observing proposer $i$'s like, sends a like back. To distinguish this second like sent by the receiver from the initial like sent by the proposer, we refer to it as a \emph{relike}. The probability of (i) occurring is $M_{ij}$, which is the platform's control variable. The \emph{login rate} of user $s \in [I]\cup[J]$, denoted by $\lambda_{s}$, is the probability that user $s$ logs into the platform. The \emph{like rate} from proposer $i \in [I]$ to receiver $j \in [J]$, denoted by $\alpha_{ij}$, represents the probability that proposer $i$ sends a like to receiver $j$, given that proposer $i$ logs in. Likewise, the \emph{relike rate} from receiver $j$ to proposer $i$, denoted by $\beta_{ij}$, represents the probability that receiver $j$ sends a relike to proposer $i$ given that proposer $i$ sends a like to receiver $j$ and receiver $j$ logs in. Accordingly, the probability that proposer $i$ and receiver $j$ have a date is $\delta_{ij} M_{ij}$, where $\delta_{ij} := \lambda_i \alpha_{ij} \lambda_{j} \beta_{ij}$ is called the \emph{dating rate}.

\begin{figure}
    \centering
    \includegraphics[width=\linewidth]{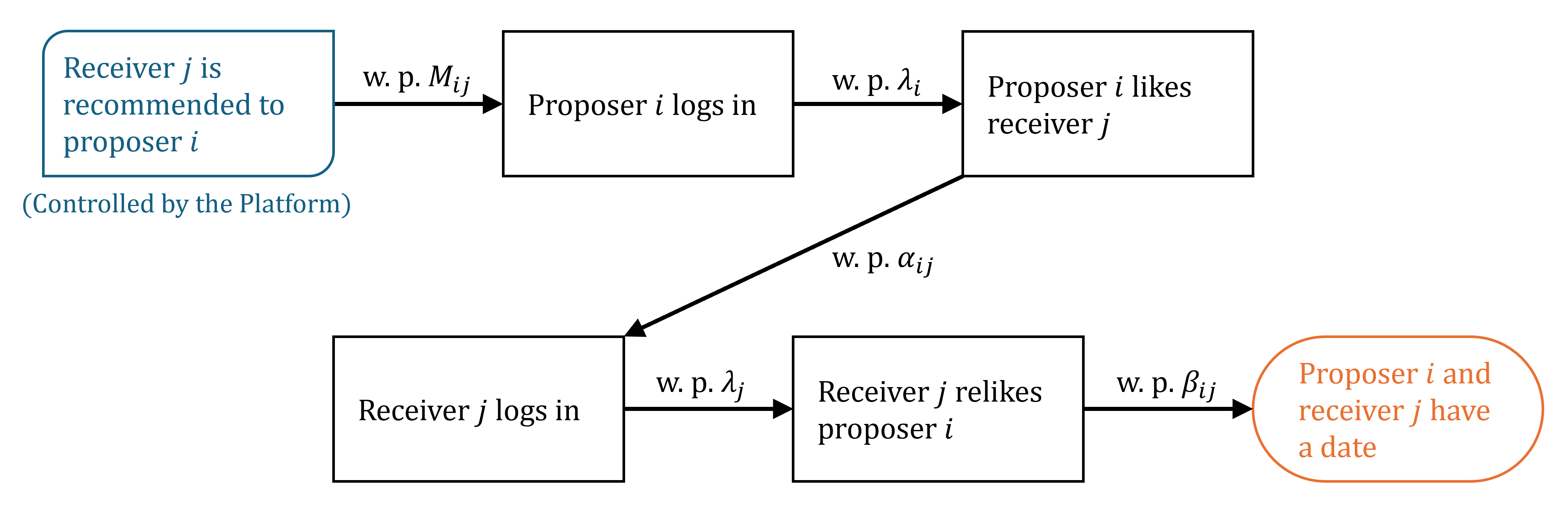}
    \caption{Chart Illustrating the Process from Recommendation to Date Formation}
    \label{fg:chart}
\end{figure}

In reality, these rates cannot be directly observed and must be predicted using data. These predictions are typically conducted using binary classification methods in machine learning, for which an extensive body of literature exists regarding methodologies and performance evaluation \citep[see, e.g.,][]{bishop2006pattern,hastie2009elements}. In this paper, to focus on the question of how these rates should be integrated into recommendations once provided, we proceed with the analysis under the assumption that the true (or sufficiently precise) values of these rates are known.

\subsection{Performance Indicators}\label{sec:performance_indicator}

Broadly speaking, the goal of the platform is to facilitate as many dates as possible among the largest number of users.
We primarily evaluate the quality of recommendations using the following three performance indicators.

\paragraph{Average Dates}
This metric is defined by
\begin{equation}
    \frac{1}{I}\sum_{i \in [I]} \sum_{j \in [J]} \delta_{ij} M_{ij} \:\: \text{(Proposer)} \quad \text{and} \quad \frac{1}{J}\sum_{i \in [I]} \sum_{j \in [J]} \delta_{ij} M_{ij} \:\: \text{(Receiver)}.
\end{equation}
Average dates evaluates the expected count of dates established on the platform. Since dates are desirable events, a higher number indicates better performance and serves as a fundamental welfare measure of efficiency. Note that average dates per proposer and receiver differ only by the relative sizes of the two sides and are therefore proportional to each other.

\paragraph{Average Effective Dates}
While average dates is a natural measure of efficiency, it does not account for congestion. In particular, unlike proposers, whose number of recommendation-matches is bounded by cognitive capacity, receivers can match with arbitrarily many proposers, and thus the receiver-side congestion can be substantial. In our setting, a ``date,'' mutual like that enables messaging, is only an intermediate outcome. When a receiver dates many proposers simultaneously, the likelihood that any given date develops into a deeper relationship is reduced.

To capture this effect, we introduce \emph{average effective dates}, a congestion-adjusted efficiency metric. We assume that, on each day, a receiver who forms multiple dates selects exactly one of them uniformly at random to develop into a deeper relationship. The uniform selection assumption is used solely to obtain a transparent congestion discount; we do not interpret it as a behavioral model of relationship formation. We refer to the selected date as an \emph{effective date}. Under this assumption, the expected contribution of a proposer--receiver pair $(i,j)$ to effective dates is discounted by the receiver's overall congestion.

Formally, let
\begin{equation}
    \mu_j \coloneqq \sum_{k \in [I]} \delta_{kj} M_{kj}
\end{equation}
denote the expected number of dates formed by receiver $j$. We define the \emph{effective dating rate} as
\begin{equation}
    \delta_{ij}^* \coloneqq \frac{1 - e^{-\mu_j}}{\mu_j}\delta_{ij}.
\end{equation}
The effective dating rate $\delta^*$ is computed by deriving the probability that a proposer and a receiver form an effective date, given a recommendation matrix $M$ and dating rates $\delta$. A formal derivation of $\delta_{ij}^*$ is provided in Appendix~\ref{sec:effective_dates_derivation}.
The factor $(1-e^{-\mu_j})/\mu_j$ is decreasing in $\mu_j$ and captures the reduction in match quality due to receiver-side congestion. 

We then evaluate a recommendation matrix $M$ by the \emph{average effective dates per proposer},
\begin{equation}
    \frac{1}{I} \sum_{i \in [I]} \sum_{j \in [J]} \delta_{ij}^* M_{ij}.
\end{equation}

\paragraph{Dating Probability}
CoupLink regards the \emph{dating probability}, i.e., the probability that a user has a date, as one of the most important performance indicators because whether a user has the opportunity to experience at least one date has a significant impact on user satisfaction and retention.

Formally, \emph{dating probability per proposer} and \emph{receiver} are defined as follows.
\begin{align}
    \text{(Proposer)} \hspace{2em}& \frac{1}{I}\sum_{i \in [I]} \lambda_i \left(1 - \prod_{j \in [J]} (1 - M_{ij} \lambda_j \alpha_{ij} \beta_{ij})\right),\\
    \text{(Receiver)} \hspace{2em}& \frac{1}{J}\sum_{j \in [J]} \lambda_j \left( 1 - \prod_{i \in [I]} (1 - M_{ij} \lambda_i \alpha_{ij} \beta_{ij})\right).
\end{align}
For each proposer $i$, $\lambda_i$ represents the probability that he logs in, while the expression inside the subsequent parentheses represents the probability that the proposer $i$ has at least one date, conditional on logging in. By multiplying these factors, we calculate the probability that proposer $i$ experiences a date as a result of the recommendation $M$. The intuition for the dating probability per receiver is similar.

Note that, as we have discussed, for a proposer, a date with a congested receiver has little value. In this sense, average effective dates and dating probability capture different dimensions: the former is a congestion-adjusted proposer-side valuation, while the latter is an extensive-margin probability of obtaining at least one date. Therefore, dating probability should mainly be used for evaluating receivers' welfare and equality.


\section{Rank-Order List}\label{sec:rank_order_list}
When determining recommendations based on the outputs of predictive models, we first define \emph{rank-order lists} (ROLs) that specify which receivers should be recommended to which proposers. These ROLs can be viewed either as the metrics used by the platform to construct recommendation rankings or, from a matching-theoretic perspective, as proxy representations of the preferences of proposers and receivers. 

In this paper, we consider two methods to construct ROLs.
\begin{description} 
    \item[Sorting by like rates] Each proposer $i$'s ROL ranks receivers by descending like rate $\alpha_{ij}$, while each receiver $j$'s ROL ranks proposers by descending relike rate $\beta_{ij}$.
    \item[Sorting by dating rates] Both proposers' and receivers' ROLs rank counterparts by descending $\delta_{ij} = \lambda_i \alpha_{ij} \lambda_j \beta_{ij}$.
\end{description}

\emph{Sorting by like rates} is the method most conceptually aligned with the design principles of matching theory. It determines the ROLs for proposing and accepting based purely on behavioral predictions related to users' like/relike actions. 

Even if the like rate is assumed to reflect a proposer's utility from a date, the utility should be discounted if the receiver may not be active or responsive. Accordingly, we further evaluate \emph{sorting by dating rates}, allowing for recommendations that better reflect the probability of forming a date. By considering both parties' preferences, this method is expected to yield recommendations that are more likely to result in successful dating.

\section{Integrators}\label{sec: integrator}

An \emph{integrator} is a function that maps the estimated values of the proposers' and receivers' login rates $(\lambda_s)_{s \in [I]\cup[J]}$, the proposer's like rate $(\alpha_{ij})_{i \in [I], j \in [J]}$, and the receiver's relike rate $(\beta_{ij})_{i \in [I], j \in [J]}$ (provided by predictive models), together with the proposer's cognitive capacity $c = (c_i)_{i \in [I]}$, to a recommendation $M$.

We consider the following integrators in our analysis: (i) A \emph{one-sided integrator} (\onesided), (ii) \emph{deferred acceptance} (\DA), and (iii) \emph{exposure-constrained DA} (\ECDA).

\subsection{One-Sided Integrator}

\onesided is designed purely to assist proposers in identifying the receivers they prefer. This integrator recommends receivers to each proposer $i$ in descending order, according to proposer $i$'s ROL, up to the proposer's cognitive capacity $c_i$. In particular, although it incorporates several minor adjustments, CoupLink's current mechanism is close to \onesided that generates ROLs based on dating rates. The platform takes no intermediary role in the matching process, except for sorting. While naive, it is a conventional method for generating recommendations and a useful baseline for analysis.

On two-sided platforms, a key weakness of \onesided is their tendency to concentrate recommendations on a small subset of receivers. On these platforms, there typically exist receivers who attract likes or dates with high probability regardless of which proposer they are recommended to. When ROLs are constructed from such ratings, recommendations naturally concentrate on these receivers. Under sorting by like rates, whether a receiver is recommended is independent of the receiver's preferences over proposers, and heavily recommended receivers may therefore find notifications from proposers in whom the receiver has little interest burdensome. Sorting by dating rates may partially mitigate this issue; however, even if a receiver forms dates with many proposers through mutual likes, constraints on time and attention limit meaningful communication, making effective dates less likely. This reduction in effective dates also harms proposers. Moreover, congestion also generates receivers who are not recommended sufficiently many times to obtain dating opportunities. For these reasons, even if \onesided optimizes short-run conversions such as likes or dates, it is not well-suited for achieving higher-level objectives.

\subsection{Deferred Acceptance (\DA)}

A direct and fundamental way to avoid congestion on a small subset of receivers is to impose a capacity (upper bound) on the number of proposers to whom each receiver can be recommended at a given time. By introducing such a constraint, exposure that would otherwise be concentrated on a few receivers is reallocated to others, leading to a more equitable distribution of recommendations and potentially generating more effective dates. While the number of receivers recommended to each proposer $i$ is already limited by cognitive capacity $c_i$, imposing a capacity $q_j$ on how many times each receiver $j$ can be recommended to proposers transforms recommendation generation into a many-to-many matching problem.

Once capacity is imposed on the receiver side, it becomes necessary to ration access to preferred receivers among proposers whose demand exceeds that capacity. Specifically, the platform must determine which proposers are allowed to be recommended as highly ranked receivers in their ROLs. A natural way to implement such rationing is to use receivers' ROLs: each receiver tentatively keeps proposers up to her capacity according to her ROL and rejects the rest, after which rejected proposers apply to other receivers until their own capacities are filled. This procedure is identical to deferred acceptance (\DA), the standard algorithm for producing stable matchings in two-sided matching problems. In the context of two-sided recommendation, \DA can therefore be interpreted as an integrator that determines which receivers are recommended to which proposers, given receiver-side capacity constraints.

In general, computing \DA may be insufficiently fast when the platform is very large, making it potentially unsuitable for recommendation system design. However, when ROLs are constructed by sorting on dating rates, the same outcome can be computed much more efficiently using an alternative algorithm. This is because the utility from each proposer--receiver pair $(i, j)$ is given by $\delta_{ij}$ and thus aligned for both the proposer and the receiver. This property ensures that the behavior of a simple greedy algorithm coincides exactly with the behavior of \DA.
\begin{theorem}\label{thm: DA is equivalent to greedy}
    \DA with sorting by dating rates returns the same recommendation matrix $M$ as the following greedy algorithm: Initialize $M$ as an $I \times J$ matrix of zeros. Take $(i, j) \in [I] \times [J]$ in descending order of the dating rates $\delta_{ij}$. Match $(i, j)$ if both proposer $i$'s and receiver $j$'s capacities still allow for it, and skip otherwise.
\end{theorem}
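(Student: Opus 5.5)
The plan is to show that both algorithms output the unique stable matching of the many-to-many market in which every agent ranks counterparts by the common score $\delta_{ij}$. Throughout I assume ties in $\delta_{ij}$ are broken by one fixed strict order on pairs, used by both algorithms. This is needed anyway for either algorithm to be well defined. With it, both sides' ROLs come from a single strict order $\succ$ on $[I]\times[J]$. Preferences are responsive with capacities $c_i$ and $q_j$, so \DA (proposer-proposing, many-to-many) returns a pairwise stable matching. The argument then has two parts: the greedy output $G$ is stable, and any stable matching equals $G$.

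\textbf{Step 1: $G$ is stable.} Each pair $(i,j)$ with $G_{ij}=0$ was skipped when examined. At that moment either $i$ already held $c_i$ partners or $j$ already held $q_j$ partners. All of those partners were added earlier, so they are pairs ranked above $(i,j)$ in $\succ$. Entries are never removed, so at the end the saturated agent still holds a full set of partners, each of whom it prefers to the other member of $(i,j)$. Hence $(i,j)$ is not a blocking pair. Individual rationality is immediate, since every pair is acceptable (or we restrict to $\delta_{ij}>0$ on both sides).

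\textbf{Step 2: uniqueness.} Let $S$ be any stable matching with $S\neq G$. Take the $\succ$-highest pair in the symmetric difference $S\,\triangle\,G$.

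If that pair $(i,j)$ lies in $G\setminus S$, then $S$ agrees with $G$ on all pairs above $(i,j)$. When greedy added $(i,j)$, both $i$ and $j$ had spare capacity counting only pairs above $(i,j)$. Since $(i,j)\notin S$ and $S$ is stable, one of them, say $i$, is saturated in $S$ by partners all preferred to $j$. Those partners form pairs above $(i,j)$, which lie in $G$ too. So $i$ had $c_i$ partners above $(i,j)$ in $G$, contradicting that $i$ had room when greedy added $(i,j)$.

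If instead $(i,j)$ lies in $S\setminus G$, greedy skipped it because, say, $i$ was full with $c_i$ higher pairs in $G$. These pairs agree with $S$, so $S$ gives $i$ at least $c_i+1$ partners, which is infeasible.

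Hence $S=G$. Since \DA outputs a stable matching, \DA $=G$. Both outputs are 0/1, so the recommendation matrices $M$ coincide.

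\textbf{Main obstacle.} The delicate point is setting up the framework so that \DA in the many-to-many setting yields a stable matching. This needs responsive (substitutable) preferences, which hold for capacity-constrained rankings. The other delicate point is ties: without a common tie-break the greedy outcome is not well defined and equivalence can fail, so I would state the convention explicitly.

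A cleaner alternative avoids stability theory. One can show directly, by induction along $\succ$, that every pair greedy adds is never rejected in \DA. In \DA, $i$ proposes to $j$ before exhausting better options. $j$ rejects $i$ only in favour of $q_j$ pairs above $(i,j)$, which by induction are exactly $j$'s greedy partners above $(i,j)$. The converse direction, that \DA matches only greedy pairs, follows by the same counting. I would use whichever framing is shorter, likely the stability route.
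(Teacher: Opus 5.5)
Your proof is correct, but it takes a different route from the paper's.

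The paper argues operationally. It takes as given (without proof) that the outcome of \DA does not depend on the order in which proposals are made. It then schedules proposals in the global descending order of $\delta_{ij}$. Each receiver therefore sees proposals in her own preference order: she accepts until full and rejects everything afterward. So every tentative acceptance is final, which is exactly the greedy pass.

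You instead prove a structural fact. Under a common strict order on pairs, the pairwise-stable matching is unique and equals the greedy output. You then need only the standard fact that many-to-many \DA with responsive preferences returns some pairwise-stable matching.

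What your route buys:
\begin{itemize}
\item It is more self-contained. The order-independence the paper assumes is itself a consequence of stability theory, and in this setting your uniqueness lemma implies it directly.
\item It shows that \emph{any} stable mechanism, including receiver-proposing \DA, yields the greedy outcome.
\item It states the tie-breaking convention explicitly. The paper's argument needs one too but leaves it implicit.
\end{itemize}

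Your Step~1 is not needed for the conclusion. Step~2 uses only that $S$ is stable and feasible, and existence of a stable matching comes from \DA itself.

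What the paper's route buys: it stays closer to the algorithm and explains why tentative matches become final. It is also the argument that carries over to Theorem~\ref{thm: ECDA is equivalent to greedy}. There, fractional acceptances make ``stable matching'' a nonstandard notion, and the paper proceeds by induction along the global order. Your sketched ``cleaner alternative'' is close in spirit to that argument.
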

Proofs are provided in Appendix~\ref{sec: proofs}. 

\subsection{Exposure-Constrained Deferred Acceptance (\ECDA)}

\DA aims to achieve equity by limiting the number of proposers to whom a receiver can be recommended. While defining a receiver's capacity in this manner is natural in the original formulation of \DA as a matching mechanism, this approach may not be optimal in our context. In a two-sided recommendation, the mechanism generates a recommendation matrix that determines ``which receivers are recommended to which proposers,'' rather than directly deciding ``who dates whom.'' Importantly, individual receivers can observe how many likes they receive, but cannot observe how many proposers they were recommended to. As a result, the number of proposers with whom a receiver is recommend-matched does not directly affect the user's experience. Instead, what the recommender system directly influences---and what receivers care about---is the number of likes they receive and the number of dates that are ultimately formed.

For this reason, each receiver $j$'s capacity should not be defined by the number of proposers they match with but rather by the expected total number of likes or dates they are predicted to receive, i.e., $\sum_{i \in [I]} M_{ij} \lambda_i \alpha_{ij} \le q_j$, or $\sum_{i \in [I]} M_{ij} \delta_{ij} \le q_j$, where $q_j$ is the receiver's capacity parameter. Because this variant of \DA defines receiver capacity using variables that are directly related to exposure---such as the number of likes received from proposers or the number of dates the receiver is expected to enjoy---we refer to it as \emph{Exposure-Constrained Deferred Acceptance} (\ECDA).

The difference between \ECDA and \DA lies only in how the receiver's capacity is defined. Since proposers' login rates and like rates are typically fractional, \ECDA requires extending the \DA protocol to allow fractional proposals and acceptances. Formally, \ECDA takes as input all users' ROLs, receivers' capacity $(q_j)_{j \in [J]}$, and \emph{exposure weight} $(w_{ij})_{i \in [I], j \in [J]}$. When we use the conditional probability that proposer $i$ sends a like to receiver $j$, i.e., take $w_{ij} = \lambda_i \alpha_{ij}$ as the exposure weight, we call it an \emph{\ECDA with like exposure}. When we use the conditional probability that $i$ and $j$ have a date, i.e., take $w_{ij} = \delta_{ij}$ as the exposure weight, we call it an \emph{\ECDA with date exposure}. Note that the conventional \DA corresponds to the case of $w_{ij} \equiv 1$ for all $i$ and $j$. \ECDA generates the recommendation matrix $M$ by iterating through the following steps.

\begin{enumerate}
    \item Select a proposer who has not yet tentatively matched with receivers up to his full cognitive capacity. This proposer identifies the highest-ranked receiver on his ROL who has not yet rejected him (even partially).
    \item The proposer proposes to the receiver with an amount equal to the smaller of (i) the remaining capacity for this proposer--receiver pair, which is given by one unit minus the amount they have already been tentatively matched, or (ii) the proposer's remaining capacity for additional matches.
    \item The receiver, say $j$, evaluates those currently tentatively matched and any new proposers in the order of her ROL. Each tentatively accepted proposer $i$ consumes $w_{ij}$ units of the receiver's capacity, and receiver $j$ tentatively accepts proposers in this order until her capacity $q_j$ is fully utilized. If the receiver's capacity is not sufficient to fully accept the next proposer on her ROL, then she accepts him fractionally. Any remaining proposers are rejected, becoming tentatively unmatched.
    \item Repeat steps (1)--(3) until one of the following termination conditions is met: (i) all proposers have fully utilized their cognitive capacity, or (ii) all receivers have been proposed to by every eligible proposer.
\end{enumerate}

Parallel to \DA, when ROLs are generated based on dating rates, \ECDA's behavior aligns with that of a greedy algorithm.
\begin{theorem}\label{thm: ECDA is equivalent to greedy}
    \ECDA with sorting by dating rates returns the same recommendation matrix $M$ as the following greedy algorithm: Initialize $M$ as an $I \times J$ matrix with all zeros. Take each pair $(i, j)$ in descending order of the dating rates $\delta_{ij}$. Update $M_{ij}$ as follows:
    \begin{equation}
        M_{ij} \leftarrow \min\left\{1, c_i - \sum_{k \in [J]}M_{ik}, \frac{q_j - \sum_{l \in [I]}w_{lj}M_{lj}}{w_{ij}}\right\}.
    \end{equation}
\end{theorem}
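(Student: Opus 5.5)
The plan is to show that the fractional \ECDA outcome is uniquely determined by a ``no blocking'' condition, and that the greedy matrix is the unique matrix satisfying it. Throughout I assume the dating rates $\delta_{ij}$ are distinct, or that ties are broken by one fixed linear order on pairs that both sides use. Under sorting by dating rates, every proposer ranks receivers by $\delta_{ij}$ and every receiver ranks proposers by $\delta_{ij}$. So both sides' ROLs are restrictions of a single global order on $[I]\times[J]$. I will also assume $w_{ij}>0$, so the capacity fraction is well defined; pairs with $w_{ij}=0$ can be handled by convention.

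\textbf{Step 1: characterize the greedy output.} Process the pairs in the global order. When pair $(i,j)$ is reached, the greedy rule fills $M_{ij}$ as far as one of three constraints allows:
\begin{itemize}
\item the pair bound $M_{ij}\le 1$;
\item proposer $i$'s residual cognitive capacity;
\item receiver $j$'s residual exposure capacity, $\bigl(q_j-\sum_l w_{lj}M_{lj}\bigr)/w_{ij}$.
\end{itemize}
Later pairs never alter $M_{ij}$. Hence the greedy $M^G$ has the following property (P): for every pair with $M^G_{ij}<1$, either $\sum_k M^G_{ik}=c_i$ using only pairs ranked above or at $(i,j)$, or $\sum_l w_{lj}M^G_{lj}=q_j$ using only pairs ranked above or at $(i,j)$. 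I will show by induction along the global order that any feasible $M$ satisfying (P) coincides with $M^G$. At the first pair where they differ, the earlier entries agree. Property (P) for both matrices then forces $M_{ij}$ to equal the greedy residual minimum at that pair, which is a contradiction.

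\textbf{Step 2: show that the \ECDA output satisfies (P).} Two monotonicity facts carry over from fractional \DA. First, a proposer only moves down his list, and the amount he holds at a receiver only shrinks after rejection. Second, receiver $j$'s tentative set only improves in her ROL: at every step she keeps a top segment of her proposers, fractionally at the boundary, filled until $q_j$ is reached. Take a final pair with $M_{ij}<1$ and split into two cases.
\begin{itemize}
\item If $i$ never proposed the remaining amount to $j$, then he filled capacity $c_i$ with receivers he ranks above $j$. Those receivers have higher $\delta$, so the first alternative of (P) holds.
\item If $i$ proposed the remaining amount and part of it was rejected, then at the moment of rejection $j$'s capacity $q_j$ was used by proposers she ranks above $i$, who have higher $\delta$, together with the accepted fraction of $i$. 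Her accepted set only improves afterwards, and she stays at full capacity once saturated, because she only rejects when over capacity. So at the end her capacity is consumed by pairs ranked at or above $(i,j)$, and the second alternative of (P) holds.
\end{itemize}
Termination condition (ii) ensures that every unsaturated proposer has exhausted his list, so every pair is covered by one of these cases.

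\textbf{Main obstacle.} The delicate part is the bookkeeping for fractional proposals and partial rejections in Step 2. A proposer may hold a fraction at $j$, later be partially rejected, and still be ``eligible''. The Step 1 uniqueness argument also needs care: in the condition ``saturated by pairs ranked above or at $(i,j)$'', the pair $(i,j)$ itself contributes the term $w_{ij}M_{ij}$. I would handle this by stating (P) with the pair's own contribution explicitly included, which is exactly the greedy residual minimum. Theorem~\ref{thm: DA is equivalent to greedy} is then the special case $w_{ij}\equiv 1$, and its integral argument serves as the template.
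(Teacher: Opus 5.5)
\textbf{Verdict.} Your argument is correct in substance, but it takes a different route from the paper. Step 2 needs one repair to its case split, described in the second paragraph.

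\textbf{Comparison with the paper.} The paper fixes one proposal schedule: pairs are processed in the global descending order of $\delta_{ij}$. It then shows by induction on prefixes that, under this schedule, no receiver ever displaces anyone, since every later proposer is less preferred. Hence each tentative acceptance is final and equals the greedy update $\min\{1, r_i, b_j/w_{ij}\}$, where $r_i, b_j$ are the residual capacities. That argument is short and mirrors the greedy scan directly. However, it tacitly relies on the \ECDA outcome not depending on the order in which proposers are selected. The paper asserts this only for integral \DA and never proves it for the fractional, weighted-capacity version. Your approach instead characterizes the greedy matrix as the unique feasible matrix satisfying the saturation property (P), and then checks that every run of \ECDA ends in a feasible matrix satisfying (P). This costs you the monotonicity bookkeeping in Step 2. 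In return, it covers every admissible schedule and proves order-independence as a by-product, so it is the more robust argument. Your uniqueness step only needs (P) for $M$ plus feasibility: feasibility gives $M_{ij}\le M^G_{ij}$, and (P) rules out strict inequality.

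\textbf{Repair to Step 2.} Split on whether $j$ ever (even partially) rejected $i$, not on whether $i$ proposed the full residual. As written, your first case is false in the following scenario. Proposer $i$ proposes only part of the residual to $j$, because his own remaining capacity caps the proposal. He is fully accepted, and is later partially displaced at $j$. He can then never return to $j$, and he may fill the freed capacity with receivers below $j$, so the proposer alternative of (P) fails. The receiver alternative still holds in that scenario, by exactly your second-case argument, which applies to any rejection event. In the complementary case, where $j$ never rejects $i$, $M_{ij}$ never decreases and so stays below $1$ throughout. Then $i$ never proposes below $j$, and at termination he is saturated by pairs ranked at or above $(i,j)$, including $j$'s own amount. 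Relatedly, under the \ECDA rules a partially rejected proposer is \emph{not} eligible to re-propose to that receiver. This makes the bookkeeping simpler than your ``main obstacle'' paragraph suggests.
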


Without Theorems~\ref{thm: DA is equivalent to greedy} and \ref{thm: ECDA is equivalent to greedy}, \DA and \ECDA may not be implementable at the platform scale because they require the construction of rankings of all proposers and receivers, iterated proposals, and repeated updates of the set of proposers tentatively accepted by receivers. In contrast, when ROLs are based on dating rates, the \DA and \ECDA outcomes can be computed by a simple greedy pass over pairs sorted by $\delta_{ij}$, with only constant-time capacity updates per pair. The resulting implementation requires only a single sorting step (and a linear scan thereafter).

\section{Simulation}\label{sec:simulation}

\subsection{Predictive Models}
In this section, we conduct a simulation analysis based on real-world data from CoupLink to evaluate the practical performance of the integrator. We use the predictive models for login behavior as well as like and relike probabilities that CoupLink currently deploys in its production environment. These models are used in day-to-day business operations to generate recommendations on the platform and therefore represent product-quality prediction systems rather than experimental or simplified benchmarks. 

Parallel to the model presented in Section~\ref{sec: model}, CoupLink facilitates matches by allowing proposers to send ``likes'' to receivers of the opposite sex. If the receiver returns a ``relike'' upon being notified, a ``date'' is established, enabling matched users to exchange messages on the platform.\footnote{The terms ``relike'' and ``date'' are defined specifically for this paper and do not correspond to the official terminology used on CoupLink.} Although both men and women can act as proposers by sending likes, in practice, men predominantly take on this role. Throughout simulations, we fix men as the proposer side and women as the receiver side.

While users can manually search for candidates using a search engine, this study focuses on ``AI Recommendations,'' the platform's recommendation feature.\footnote{Note that the search feature available for manual searches also involves recommendations based on preference predictions.} When this feature is activated, for a user (who becomes a proposer) a fixed number of candidates (receivers) are displayed one at a time. Proposers can either send a like or skip to the next receiver. The recommendation list updates daily, and the recommender system determines each proposer's recommendation list.

CoupLink employs machine-learning models to predict key user behaviors that determine whether a recommendation results in a date. Specifically, the platform predicts (i) proposer login probability, (ii) proposer like probability, (iii) receiver login probability, and (iv) receiver relike probability, and integrates these predictions to compute dating rates used for recommendation generation.

The proposer's login rate is defined as the predicted probability that an active user (one who has logged in within the past one week and satisfies platform-specific eligibility criteria) will use the AI recommendation feature on the following day. Conditional on using the recommendation feature, the proposer's like rate is the predicted probability that the proposer chooses to send a like rather than skip a recommended receiver. On the receiver side, the login rate is defined as the predicted probability that an active user logs in within the next seven days, and the relike rate is the predicted probability that the receiver reciprocates a like upon observing it.

While the detailed feature set and model specifications are proprietary, all four predictive tasks are implemented as binary classification problems using gradient-boosted decision trees. These models are trained on user registration attributes and historical service usage logs, and are updated daily to reflect recent activity.

\begin{figure}[t]
\centering
\begin{minipage}[b]{0.47\columnwidth}
    \centering
    \includegraphics[width=1.0\columnwidth]{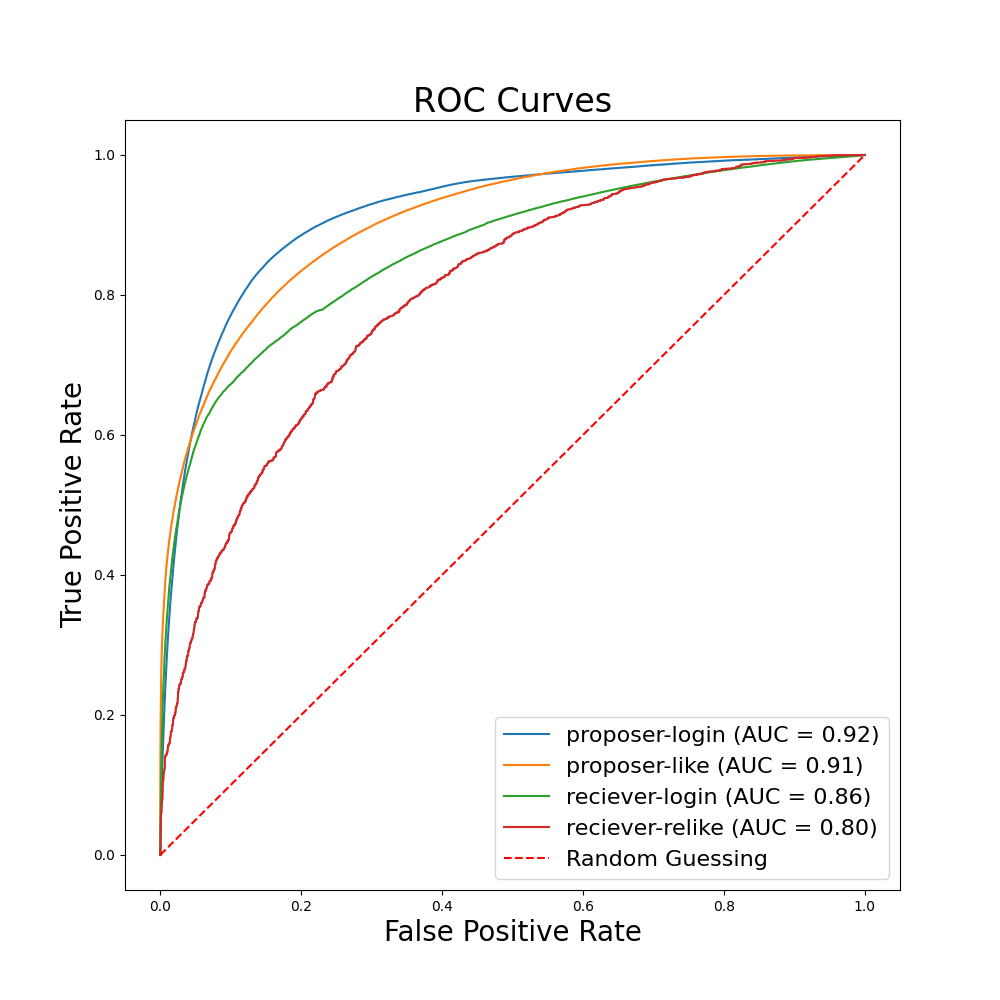}
    \subcaption{ROC Curves}
\end{minipage}
\hfill
\begin{minipage}[b]{0.47\columnwidth}
    \centering
    \includegraphics[width=1.0\columnwidth]{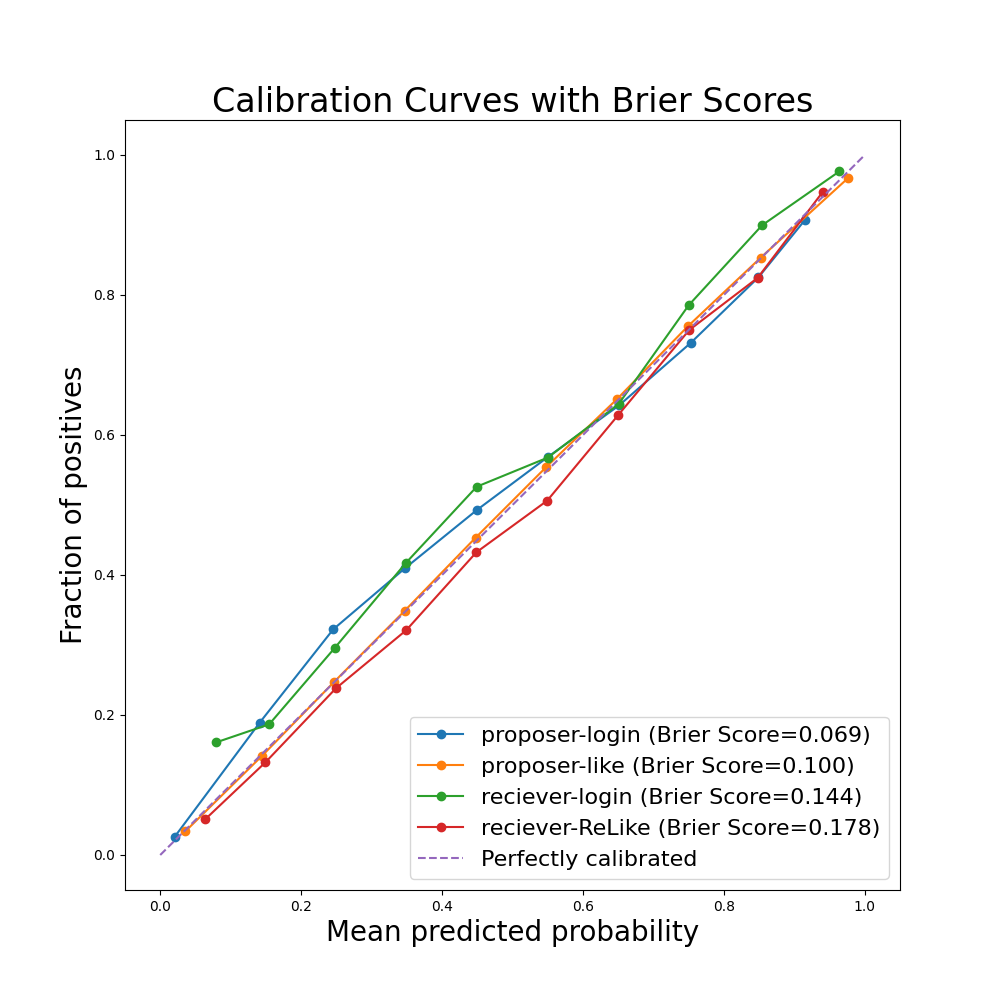}
    \subcaption{Calibration Curves}
\end{minipage}
\caption{ROC Curves and Calibration Curves}
\label{fg:roc_and_calibration_appendix}
\begin{flushleft}\footnotesize
  Note: The ROC curve plots the true positive rate against the false positive rate at various threshold settings. The calibration curve plots the mean predicted probability against observed positive rates across bins.
  \end{flushleft}
\end{figure}

\begin{table}[t]
    \centering
    \begin{tabular}{lcc}
        \hline
        Model       & AUC & Brier Score \\
        \hline
        Proposer-login          & 0.916        & 0.069                \\
        Proposer-like           & 0.906        & 0.100                \\
        Receiver-login          & 0.863        & 0.144                \\
        Receiver-relike         & 0.800        & 0.178                \\
        \hline
    \end{tabular}
    \caption{Area Under the Curve (AUC) and Brier Score}
    \label{tb:auc_brier_appendix}
    \begin{flushleft}\footnotesize
        Note: The Area Under the Curve (AUC) is the total area under the ROC curve, providing a single scalar value that reflects the classifier's performance. The Brier Score measures the mean squared difference between the predicted probabilities and the realized outcomes.
    \end{flushleft}
\end{table}

To assess whether these predictive models are sufficiently accurate for use in the simulation study, we evaluate their performance using standard classification and calibration metrics. Figure~\ref{fg:roc_and_calibration_appendix} reports receiver operating characteristic (ROC) curves and calibration curves for each model, while Table~\ref{tb:auc_brier_appendix} summarizes the area under the ROC curve (AUC) and Brier score.

The ROC curve illustrates a model's ability to distinguish positive and negative outcomes across different thresholds. An AUC value of 0.5 corresponds to random classification, while higher values indicate stronger discriminatory power. Across the four predictive tasks, the models achieve AUC values of 0.916 for proposer login, 0.906 for proposer like, 0.863 for receiver login, and 0.800 for receiver relike, indicating strong predictive performance.

Calibration curves assess how closely predicted probabilities align with observed frequencies. The Brier score provides a scalar summary of calibration accuracy, with lower values indicating better performance. The reported Brier scores---0.069 (proposer login), 0.100 (proposer like), 0.144 (receiver login), and 0.178 (receiver relike)---indicate reasonable calibration across tasks.

Taken together, these results confirm that the predictive models capture proposer and receiver behaviors with sufficient accuracy to serve as inputs for the simulation analysis. Accordingly, in the main analysis, we treat the predicted login, like, and relike probabilities as primitives and focus on how these predictions should be integrated into recommendation mechanisms.

\subsection{Simulation Environment}\label{subsec:simulation_environments}
We evaluate the proposed mechanisms using the following two simulation environments.

\paragraph{Synthetic Market}

To test the performance of sorting by like rates, which is not computationally efficient except for \onesided, we conduct simulations with a small, \emph{synthetic market}. We construct a joint distribution of predicted behavioral metrics---proposer login rate, proposer like rate, receiver login rate, and receiver relike rate---using the values predicted by CoupLink on November~6,~2024, a regular business day with no special event. Each rate is discretized using bins of width~0.1 to avoid an excessive number of empty cells and to keep the joint distribution at a manageable size.

From this joint distribution, we randomly draw $I=1,000$ proposers and $J=1,000$ receivers to form a single dataset, imposing symmetry in market size across the two sides. This normalization ensures that average outcomes are invariant to whether they are computed per proposer or per receiver; in practice, the two sides are also approximately balanced in the real data. The proposer's cognitive capacity is fixed at $c_i = 25$ receivers for all $i \in [I]$. We repeat this procedure ten times, resulting in ten independent datasets.

\paragraph{Empirical Market}

The \emph{empirical market} more closely simulates the business environment of CoupLink and thus is used for tuning parameters in the field experiment.
The dataset is constructed from proposer--receiver candidate pairs generated by CoupLink's production system on April~7,~2025. These pairs satisfy standard eligibility conditions used by the platform, such as geographic proximity and age differences, and are enriched with predicted proposer login rates, receiver login rates, like rates, relike rates, proposer area information, and the recommendation outcomes produced by the existing algorithm. For the simulations, we restrict attention to candidate pairs in which the proposer is located in the Kanto region. Independent of the platform's overall scale, we construct an experimental environment using a sampled set of users; the resulting dataset includes approximately $I = 8,000$ proposers and $J = 5,000$ receivers. Proposers' cognitive capacity is set to $c_i = 65$.

Note that, due to the scale, \DA and \ECDA that rely on sorting by like rates are not computationally practical for this environment, since the acceleration techniques developed in Theorems~\ref{thm: DA is equivalent to greedy} and \ref{thm: ECDA is equivalent to greedy} are not applicable.

\subsection{Results}\label{subsec:simulation_results}

\subsubsection{Synthetic Market}

\begin{table}[tb]
\centering
\caption{Performance Comparison of \onesided and \DA in the Synthetic Market}
\label{tab:sim_one_sided_da_ecda}
\begin{tabular}{lcccccc}
\hline
& Avg. & Avg. & \multicolumn{2}{c}{Dating Prob.} & Avg.\\
\cline{4-5}
Integrator
& Dates
& Effective Dates
& (Proposer)
& (Receiver)
& Likes \\
\hline
\onesided (like-sort)
& 0.2184
& 0.1859
& 0.1260
& 0.1716
& 4.182 \\
\DA (like-sort)
& 0.2227
& 0.1926
& 0.1269
& 0.1816
& 3.978 \\
\onesided (date-sort)
& 0.3543
& 0.2317
& 0.1527
& 0.2237
& 3.531 \\
\DA (date-sort)
& 0.3030
& 0.2312
& 0.1407
& 0.2233
& 3.736 \\
\DA (date-sort, cap=40)
& 0.3372
& 0.2436
& 0.1495
& 0.2340
& 3.675 \\
\hline
\end{tabular}

\begin{flushleft}\footnotesize
Notes: The capacities of \DA (like-sort) and \DA (date-sort) are set to $q_j = 25$ (balanced market).
\end{flushleft}
\end{table}

We begin by examining the performance of \DA as the most basic alternative to \onesided, which directly optimizes the expected number of likes or dates. The receiver capacity in \DA is set to $q_j = 25$, the same as the proposers' cognitive capacity. Table~\ref{tab:sim_one_sided_da_ecda} reports the resulting performance metrics for \onesided and \DA under two different constructions of ROLs: sorting by like rates and sorting by dating rates. To illustrate the effect of increasing the receiver's capacity, we also list the performance of \DA with sorting by dating rates and capacity $q_j = 40$.

Whereas sorting by like rates increases likes, sorting by dating rates clearly outperforms in dates, effective dates, and dating probabilities. Under \onesided, sorting by like rates leads to approximately 4.2 likes sent per proposer per day, which are also received by receivers on average. When sorting by dating rates is adopted instead, this number decreases to about 3.5 likes per day.
In contrast, the average dates per day increases substantially, from 0.22 under sorting by like rates to 0.35 under sorting by dating rates. A similar pattern is observed for \DA and is also confirmed for \ECDA across a wide range of parameters, although these results are omitted from the table. This difference arises because sorting by dating rates incorporates not only whether a proposer is likely to prefer a receiver, but also whether the receiver is likely to log in and whether she is likely to reciprocate the proposer's interest. Since dates represent a higher-level objective than likes, these results suggest that, as a baseline principle, sorting by dating rates yields superior performance in two-sided recommendation settings.

From a theoretical perspective, \DA is expected to improve effective dates relative to \onesided by mitigating congestion, even at the cost of generating fewer dates overall. However, the resulting levels of average effective dates under \DA are similar to those under \onesided. A closer look at the distribution of likes and dates under \DA reveals that, while \DA indeed alleviates congestion, a pronounced reduction in dates offsets this effect. Compared to \onesided, \DA produces significantly fewer dates, and the loss in date volume outweighs the gains from reduced congestion. Although \DA appears appealing for matching theorists, a naive implementation does not deliver superior performance.

Nevertheless, when receiver capacity $q_j$ is appropriately chosen, \DA performs well. The bottom row reports the performance of \DA\ with a relaxed capacity of up to 40. While average dates slightly decrease relative to \onesided, \DA outperforms \onesided in both average effective dates and dating probability per receiver. This highlights that for \DA to perform well in two-sided recommendation, it is crucial to appropriately tune receiver capacity, which is a design parameter rather than a primitive of the problem. Across all simulations conducted in this paper, we consistently find that overall performance is higher when receiver capacity is set to be relatively loose.

\begin{figure}[tb]
  \begin{center}
       \begin{minipage}[b]{0.47\columnwidth}
            \centering
            \includegraphics[width=1.0\columnwidth]{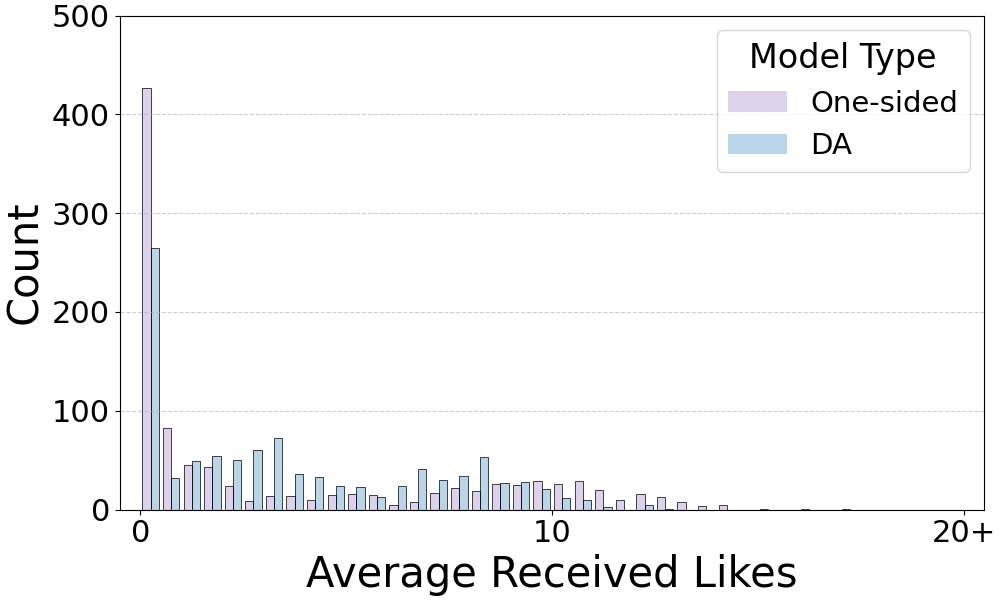}
            \subcaption{Likes, \onesided and \DA}
        \end{minipage}
        \hfill
        \begin{minipage}[b]{0.47\columnwidth}
            \centering
            \includegraphics[width=1.0\columnwidth]{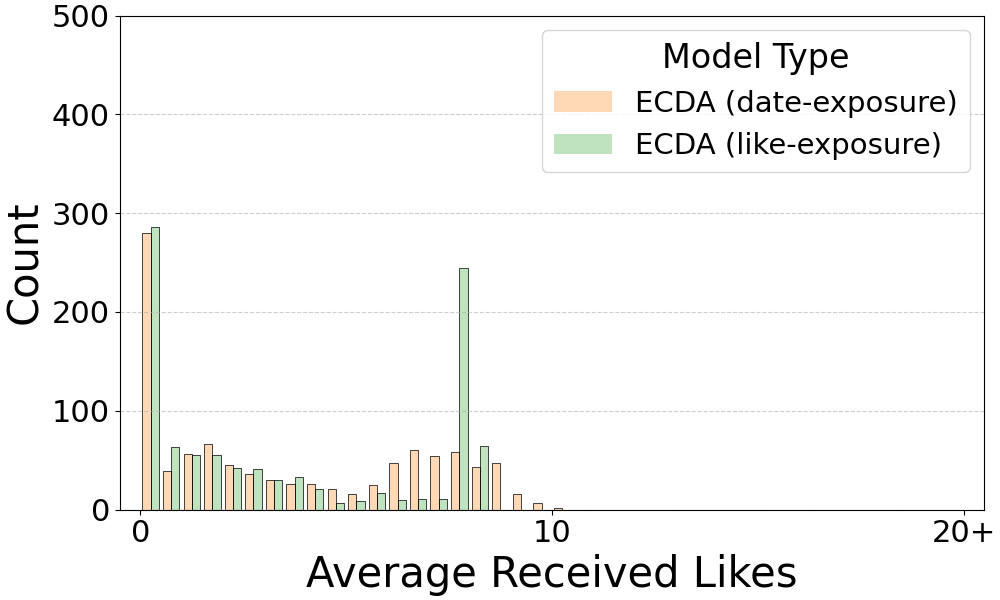}
            \subcaption{Likes, \ECDA}
        \end{minipage}\\
       \begin{minipage}[b]{0.47\columnwidth}
            \centering
            \includegraphics[width=1.0\columnwidth]{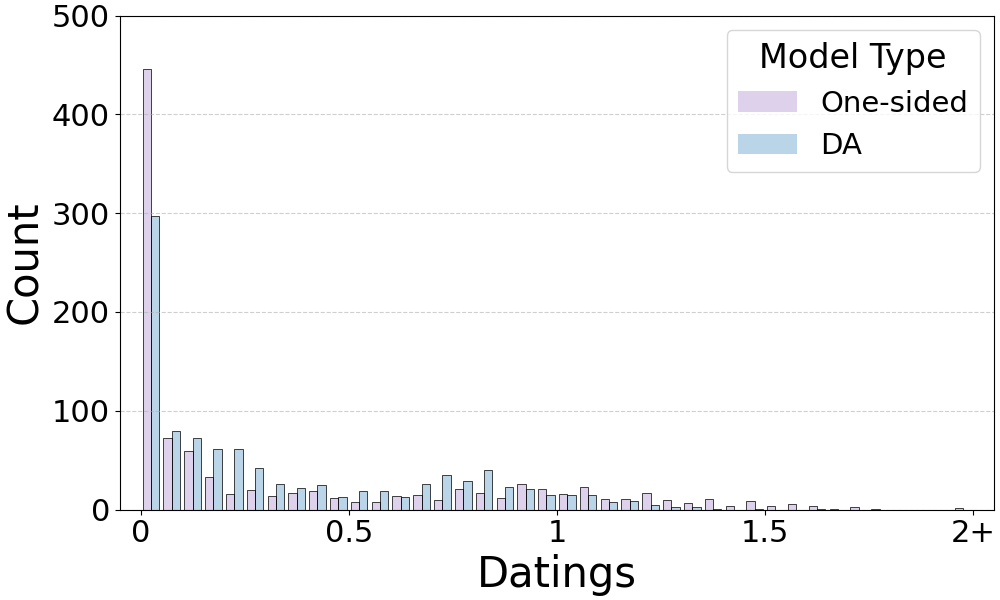}
            \subcaption{Dates, \onesided and \DA}
        \end{minipage}
        \hfill
        \begin{minipage}[b]{0.47\columnwidth}
            \centering
            \includegraphics[width=1.0\columnwidth]{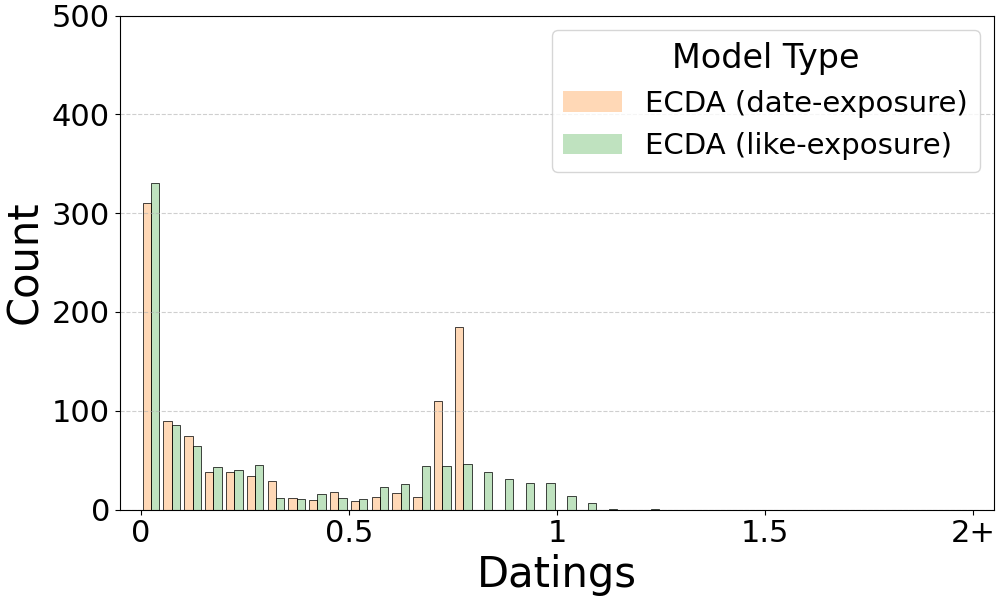}
            \subcaption{Dates, \ECDA}
        \end{minipage}
       \caption{The Distribution of Likes and Dates per Receiver (\onesided vs \DA vs \ECDA)}
       \label{fig:compare_quad}
  \end{center}
  \begin{flushleft}\footnotesize
  Note: ROLs are based on dating rates. $q_j = 40$ for \DA, $q_j = 8.0$ for \ECDA with like exposure, and $q_j = 0.75$ for \ECDA with date exposure.
  \end{flushleft}
\end{figure}

To examine how different integrators affect congestion, Figure~\ref{fig:compare_quad} illustrates the distributions of the number of likes received and the number of dates formed by individual receivers under \onesided, \DA, and \ECDA. ROLs are based on dating rates. For \DA and \ECDA, we adopt the capacity parameters that yielded the highest effective dates among those we tested: $q_j = 40$ for \DA, $q_j = 8.0$ for \ECDA with like exposure, and $q_j = 0.75$ for \ECDA with date exposure. Note that all of these receiver capacity settings are relatively loose. Under \DA, the average number of times a receiver is recommended is 25, while under \ECDA with like exposure the average number of likes received is 3.60, and under \ECDA with date exposure the average number of dates is 0.32.

Under \onesided, the distributions of both likes and dates exhibit the heaviest right tails. A large number of receivers receive no recommendations at all, resulting in zero expected likes and dates, while a non-negligible fraction of receivers accumulate an exceptionally large number of likes and dates. These receivers experience severe congestion, reflecting the tendency of \onesided to concentrate exposure on a small subset of receivers.

\DA partially alleviates this concentration by distributing recommendation opportunities more evenly across receivers. However, because receiver capacity in \DA is defined in terms of the number of proposers with whom a receiver can be recommend-matched, substantial dispersion in the realized numbers of likes and dates remains. As a result, congestion is mitigated but not eliminated.

By contrast, \ECDA directly constrains the predicted number of likes or dates received by each receiver, which allows it to compress the distributions of both outcomes into a substantially narrower range. Under like exposure, a visible spike appears at the point where the cap on expected likes binds, while under date exposure, a similar spike emerges at the cap on expected dates. Although the exposure constraint is imposed on only one of the two dimensions, likes and dates are closely related outcomes. Consequently, even the unconstrained metric exhibits considerably less dispersion under \ECDA than under \DA.

This reduction in dispersion implies that \ECDA succeeds in allocating likes and dates more equitably across receivers, avoiding extreme congestion without eliminating valuable recommendation opportunities. These distributional changes provide a direct explanation for \ECDA's superior performance in average effective dates and dating probability on the receiver side.

\subsubsection{Empirical Market}

\begin{table}[tb]
\centering
\caption{Performance Comparison of \onesided, \DA, \ECDA, and the Current Recommender}
\label{tab:actual_one_sided_da_ecda}
\begin{tabular}{lcccccc}
\hline
& Avg. & Avg. & \multicolumn{2}{c}{Dating Prob.} & Avg.\\
\cline{4-5}
Integrator
& Dates
& Effective Dates
& (Proposer)
& (Receiver)
& Likes \\
\hline
\onesided (like-sort)
& 0.0401
& 0.0241
& 0.0288
& 0.0344
& 5.384 \\
\onesided (date-sort)
& 0.1231
& 0.0579
& 0.0508
& 0.0857
& 3.693 \\
Current
& 0.1182
& 0.0584
& 0.0493
& 0.0863
& 3.565 \\
\DA (date-sort, cap=140)
& 0.1033
& 0.0605
& 0.0424
& 0.0904
& 3.610 \\
\ECDA (like-exposure, cap=22.0)
& 0.1054
& 0.0603
& 0.0424
& 0.0912
& 3.308 \\
\ECDA (date-exposure, cap=1.5)
& 0.0928
& 0.0623
& 0.0401
& 0.0932
& 3.658 \\
\hline
\end{tabular}

\begin{flushleft}\footnotesize
Notes: \ECDA (like-exposure) and \ECDA (date-exposure) adopt sorting by dating rates.
\end{flushleft}
\end{table}

We first summarize the performance of \onesided, \DA, \ECDA, and CoupLink's current recommender in Table~\ref{tab:actual_one_sided_da_ecda}. For \DA and \ECDA, we fix ROLs to sorting by dating rates. The receiver capacity parameters are the optimal ones among those we have tested: $q_j = 140$ for \DA, $q_j = 22.0$ for \ECDA with like exposure, and $q_j = 1.5$ for \ECDA with date exposure. Across all designs, exposure constraints trade off total dates for effective dates, and \ECDA with sorting by dating rates and date exposure achieves the best effective dates and dating probability per receiver.

Excluding average likes, sorting by dating rates substantially outperforms sorting by like rates across all date-related outcomes. While sorting by like rates yields a higher average number of likes (5.384) than sorting by dating rates (3.693), it performs far worse in terms of average dates (0.0401 versus 0.1231). Given this large gap in the number of realized dates, similarly pronounced differences emerge for effective dates and dating probabilities. This observation is consistent with that from the synthetic market.

Although CoupLink's current recommender is not identical to \onesided under sorting by dating rates, the two exhibit similar performance patterns. By incorporating ad hoc adjustments to mitigate receiver-side congestion, the current recommender slightly reduces average dates while modestly improving average effective dates and receiver-side dating probability, compared with \onesided, though the differences are subtle.

In this sense, \DA and \ECDA can be viewed as systematic and more efficient alternatives to the adjustments embedded in the current recommender. Both mechanisms intentionally trade off a reduction in the total number of dates for gains in average effective dates and receiver-side dating probability, thereby more effectively alleviating congestion and improving higher-quality matching outcomes. \DA improves congestion-adjusted metrics by imposing an upper bound on the headcount of proposers with whom a receiver can be recommend-matched, while \ECDA does so by constraining the predicted total number of likes received or dates formed. By preventing a small subset of receivers from monopolizing dating opportunities, both mechanisms mitigate congestion and improve congestion-adjusted outcomes.

\begin{figure}[tb]
  \begin{center}
       \begin{minipage}[t]{0.47\columnwidth}
            \centering
            \includegraphics[width=1.0\columnwidth]{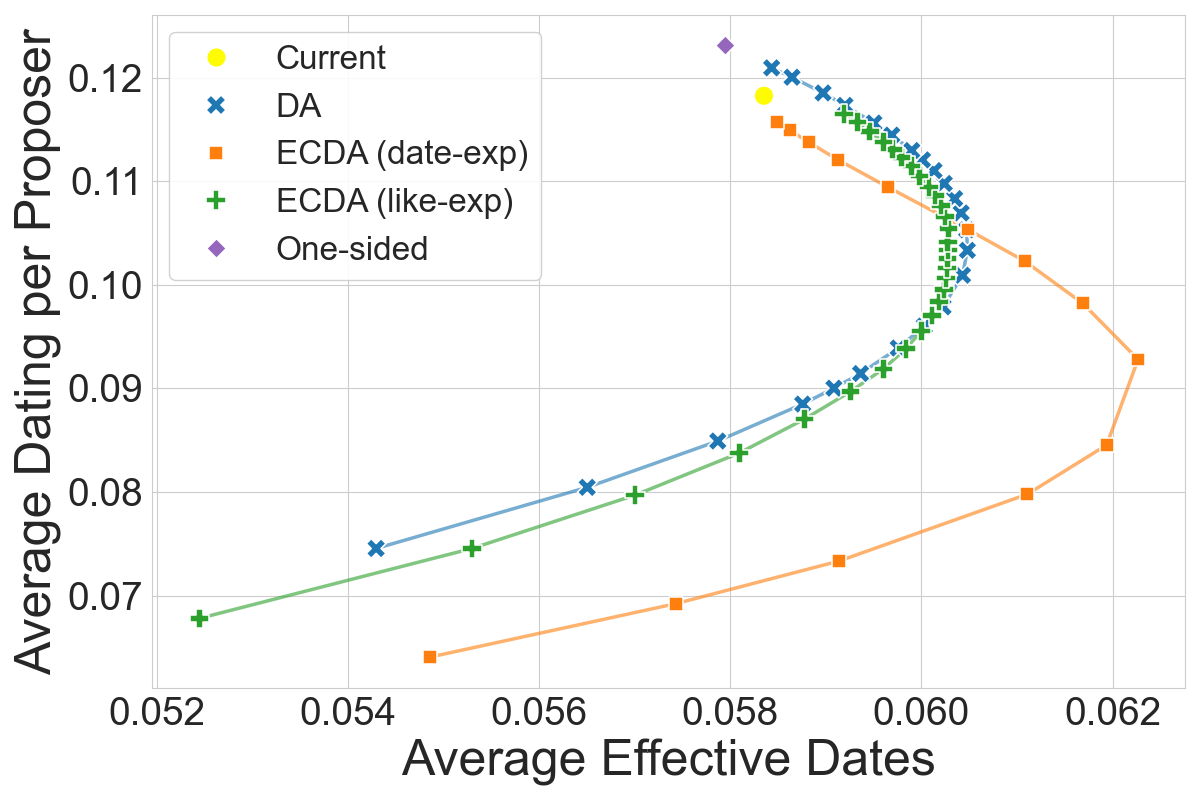}
            \subcaption{Average Effective Dates vs Average Dates}
            \label{fig:sim_effective_dates_vs_dates}
        \end{minipage}
        \hfill
        \begin{minipage}[t]{0.47\columnwidth}
            \centering
            \includegraphics[width=1.0\columnwidth]{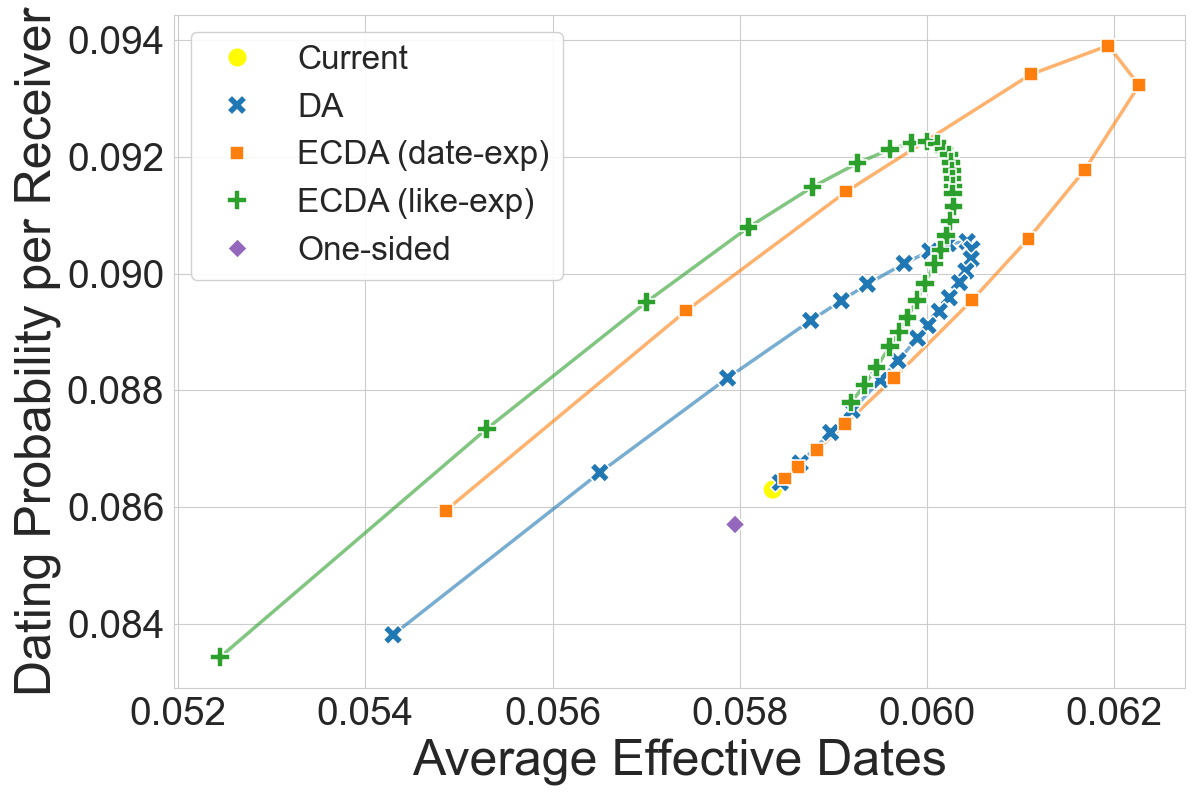}
            \subcaption{Average Effective Dates vs Dating Probability per Receiver}
            \label{fig:sim_effective_dates_vs_dating_prob}
        \end{minipage}
  \end{center}
  \caption{Performance Comparison of Different Integrators}
  \label{fig:sim_date_performance_plot}
  \begin{flushleft}\footnotesize
  Note: Both \DA and \ECDA use sorting by dating rates. For \DA, capacity ranges from 30 to 1000, with coarser grids at higher values. For \ECDA, capacity is varied based on exposure. Like exposure ranges from 4.0 to 60.0 with non-uniform step sizes, while date exposure ranges from 0.4 to 8.0. 
  \end{flushleft}
\end{figure}

Figure~\ref{fig:sim_date_performance_plot} presents scatter plots illustrating the relationship between dates, effective dates, and dating probabilities per receiver achieved by various integrators. Each point corresponds to a capacity setting; curves trace capacity sweeps. Figure~\ref{fig:sim_effective_dates_vs_dates} shows that average dates and average effective dates are positively related, but these two are not perfectly aligned. The figure plots each integrator's performance in the two-dimensional space defined by average effective dates on the horizontal axis and average dates on the vertical axis. Since higher values of both metrics are desirable, integrators that appear closer to the upper-right region of the figure can be considered to perform better overall.

The highest performance in terms of average dates is achieved by \onesided. For both \DA and \ECDA, average dates increase as receiver capacity is relaxed and converge to the performance of \onesided, which corresponds to the case of infinite receiver capacity. In contrast, the response of average effective dates to increases in receiver capacity is nonmonotonic. When receiver capacity is too small, effective dates are limited because promising proposer--receiver pairs that are likely to lead to dates cannot be included in the recommendation matching. When receiver capacity is too large, however, severe congestion arises on the top receivers, which in turn reduces the number of effective dates. As a result, across \DA, \ECDA with like exposure, and \ECDA with date exposure, effective dates are maximized at an intermediate level of receiver capacity and exceed the level achieved by \onesided.

Figure~\ref{fig:sim_effective_dates_vs_dating_prob} replaces average dates on the vertical axis with the dating probability per receiver. Effective dates capture proposer-side congestion-adjusted value, while the dating probability per receiver captures receiver-side reach. Therefore, these two metrics are important for CoupLink's platform design. Both \DA and \ECDA can achieve favorable outcomes on these dimensions by choosing an intermediate level of receiver capacity, which balances the increase in dates against the mitigation of congestion. Among the integrator designs we consider, the superior performance of \ECDA with date exposure is particularly pronounced. Congestion on the receiver side is most directly driven by dates. Thus, constraining dates is more effective than constraining likes or headcounts. Figure~\ref{fig:compare_quad} presented in Appendix~\ref{sec:simulations_with_synthetic_markets} demonstrates that \ECDA provides more equitable opportunities for being liked and dating much more effectively than \DA.

The configurations that achieve the best average effective dates (which is also approximately the best for the dating probability on the receiver side) among those tested are as follows. For \DA, effective dates are maximized at a receiver capacity of 140, which yields an average of 104.9 recommended matches per receiver. For \ECDA with like exposure, the optimal capacity is 22.0, which yields an average of 3.308 likes per receiver (as reported in Table~\ref{tab:actual_one_sided_da_ecda}). For \ECDA with date exposure, the highest performance is achieved at a capacity of 1.5, which yields an average of 0.150 dates per receiver (obtained by converting per-proposer average dates using the proposer--receiver population ratio in this simulation sample). In all three cases, the optimal capacity is relatively loose: recommendations for most receivers are effectively unconstrained, while only receivers with exceptionally high dating rates face binding capacity limits. This suggests that performance gains primarily come from mitigating excessive congestion on a small subset of top receivers, rather than from broadly restricting recommendation opportunities across the market.

\section{Field Experiment}

Offline simulations in Section~\ref{sec:simulation} suggest that \ECDA can substantially improve two-sided recommendations when (i) ROLs incorporate mutual interest (dating rates) rather than the like rate alone, and (ii) receiver-side exposure is controlled via a constraint imposed on the sum of dating rates. These properties are especially relevant for CoupLink, where a ``date'' (mutual like) is an intermediate milestone and where downstream frictions (messaging, scheduling, cognitive load) may generate additional constraints not captured by a pure matching count.

While these simulation results are informative, their validity relies on predicted behaviors and abstracts from equilibrium responses that may arise in live platforms. In practice, recommendations could affect user behavior in various ways, and changes in exposure can generate spillovers across users on both sides of the market. 

To assess whether the mechanisms identified in the simulation operate in a real-world environment, we conduct a large-scale geographic rollout that enables causal evaluation of \ECDA at production scale. We design the experiment to (i) minimize cross-user interference inherent in two-sided platforms, (ii) separately measure outcomes based on model predictions and realized behavioral responses, and (iii) measure not only matching outcomes but also post-match engagement.

\subsection{Experimental Setting}

\subsubsection{Experimental Setup}
We implemented a geographically segmented deployment and evaluated its impact using a difference-in-differences (DID) design. The treatment area was the \emph{Kanto} area, and the control area was the \emph{Kansai-Tokai} area. Note that Kanto, Kansai, and Tokai are the most populous areas of Japan. We chose an area-level rollout instead of user-level analysis because recommendation changes can generate spillovers: treated users may redirect likes toward candidates that control users would otherwise have seen, and receivers' experiences can change as the composition of inbound likes shifts. Group-level assignment helped limit within-market contamination relative to a fully randomized user-level experiment.\footnote{Two-sided recommendation affects exposure and attention, which can generate equilibrium effects: changing ``who is shown to whom'' shifts like flows, which in turn changes who receives attention and who appears responsive in the data. If treated and control users operate in the same local pool, control outcomes can be contaminated by treated users' behavior, biasing estimates (e.g., through congestion and feedback effects, as discussed by \citet{rios2023improving,manshadi2023redesigning}). A geographic rollout is a practical way to reduce such interference while remaining operationally simple.}

The first rollout of recommendations began on January 13, 2026, and ran for two weeks through January 26, 2026. During this window, the default recommendation logic for sessions originating in Kanto was switched from the status quo to our proposed \ECDA with sorting by dating rates and date exposure with a capacity of 1.5. Formally, treatment was applied whenever the user's session location (latitude/longitude recorded at the time the AI recommendation feed is requested) was mapped to a prefecture in Kanto. The control area, Kansai-Tokai, continued to operate under the status quo recommendation logic throughout the same period.\footnote{Concretely, we operationalize Kanto as Tokyo, Kanagawa, Chiba, Saitama, Ibaraki, Tochigi, and Gunma, and Kansai-Tokai as Osaka, Kyoto, Hyogo, Nara, Shiga, Wakayama, Aichi, Gifu, Shizuoka, and Mie.} The two areas are geographically distant and have a large population of male and female users on the app. Users are not informed about which recommendation logic is active in their area.

\paragraph{Treatment logic}
Based on the simulation results (in particular, the one illustrated in Figure~\ref{fig:sim_effective_dates_vs_dating_prob}), the treatment adopted \ECDA with sorting by dating rates and date exposure with a capacity of 1.5, which delivers the highest performance in terms of average effective dates. The capacity constraint limited the (predicted) expected number of dates per receiver per day to at most 1.5. As discussed in Section~\ref{subsec:simulation_results}, this upper bound is loose, and thus is binding only for a small subset of receivers who attract exceptionally many recommendations under the current recommender. In other words, most receivers did not experience a reduction in the recommendation opportunities. As a result, the introduction of this treatment had a low operational risk from a business perspective.

\subsubsection{Outcome}\label{subsubsec:experiment_outcome}
Because the platform values both market-level efficiency and equitable access to opportunities, we evaluate the rollout using a layered set of outcome measures. All outcomes are computed at the area $\times$ day level, consistent with CoupLink's daily recommendation refresh cycle.

First, we consider predicted outcomes at recommendation, which are constructed from model predictions at the time recommendations are generated. They quantify expected matching opportunities and congestion prior to user actions, allowing for a direct comparison with the simulation results.

Second, we examine realized outcomes, which are based on actual user behavior observed within two weeks after the recommendation. Unlike predicted outcomes, these measures reflect how changes in exposure translate into observed matching behavior in reality and can only be obtained through a live deployment.

Finally, we track post-engagement outcomes, which capture user interaction following a realized date. These include the average number of messages sent and the probability of messaging on both the proposer and receiver sides, measuring whether improvements in exposure and match formation propagate to downstream engagement.

Formal constructions of all outcome variables are provided in Appendix~\ref{sec:outcome_definitions}.

\subsubsection{Difference-in-Differences Methodology}

We estimate intention-to-treat (ITT) effects using a DID design that compares outcomes in geographical areas assigned to the treatment area (Kanto) with those in geographical areas assigned to the control area (Kansai-Tokai). Compared with a finer geographic split, Kanto and Kansai-Tokai are separated by substantial distance, reducing boundary contamination from users who physically cross the treated/control border within short time horizons. Let $a$ index geographical areas and $t$ index calendar days. For an outcome $Y_{ta}$ observed at the area-day level, our baseline specification is
\begin{equation}
\label{eq:did-baseline}
Y_{ta}
= \alpha_a + \delta_t
+ \beta_{\text{ITT}}\mathbf{1}\{a=\text{Kanto}\}\times\mathbf{1}\{t\in\mathcal{T}_{\text{deploy}}\}
+ \varepsilon_{ta},
\end{equation}
where $\alpha_a$ are area fixed effects and $\delta_t$ are day fixed effects capturing common shocks (seasonality, holidays, marketing, etc.), and $\mathcal{T}_{\text{deploy}}$ is the set of treatment days.\footnote{The estimand is the assignment-based ITT effect of deploying \ECDA in Kanto during the rollout window. Because treatment variation is at the area level with one treated and one control macro-unit, conventional large-cluster asymptotics are not available. We therefore use the pooled DID coefficients primarily as post-period summary effect measures and interpret associated standard errors and significance stars as suggestive rather than definitive.}


\paragraph{Identification Assumptions.}
Identification of $\beta_{\text{ITT}}$ in Equation~\eqref{eq:did-baseline} relies on the following assumptions:
(i) \textit{parallel trends}: absent \ECDA, treated and control areas would have evolved similarly;
(ii) \textit{no anticipation}: outcomes did not respond before $t_0$;
(iii) \textit{stable composition}: the intervention did not differentially change the population at risk across areas (e.g., the stock of active users);
and (iv) \textit{limited interference}: spillovers between Kanto and Kansai-Tokai were small.
We assess (i) by inspecting pre-trends in the main outcomes in Figure~\ref{fg:event_study}. Assumption (ii) is supported operationally because the deployment was not announced.
Assumptions (iii) and (iv) are supported by the fact that user mobility across areas was negligible: only 0.003\% of users ever switched areas during the study window.

\subsection{Results}

\subsubsection{Event Study Evidence}

\begin{figure}[tp]
    \begin{center}
    \begin{minipage}[b]{0.47\columnwidth}
        \centering
        \includegraphics[width=1.0\columnwidth]{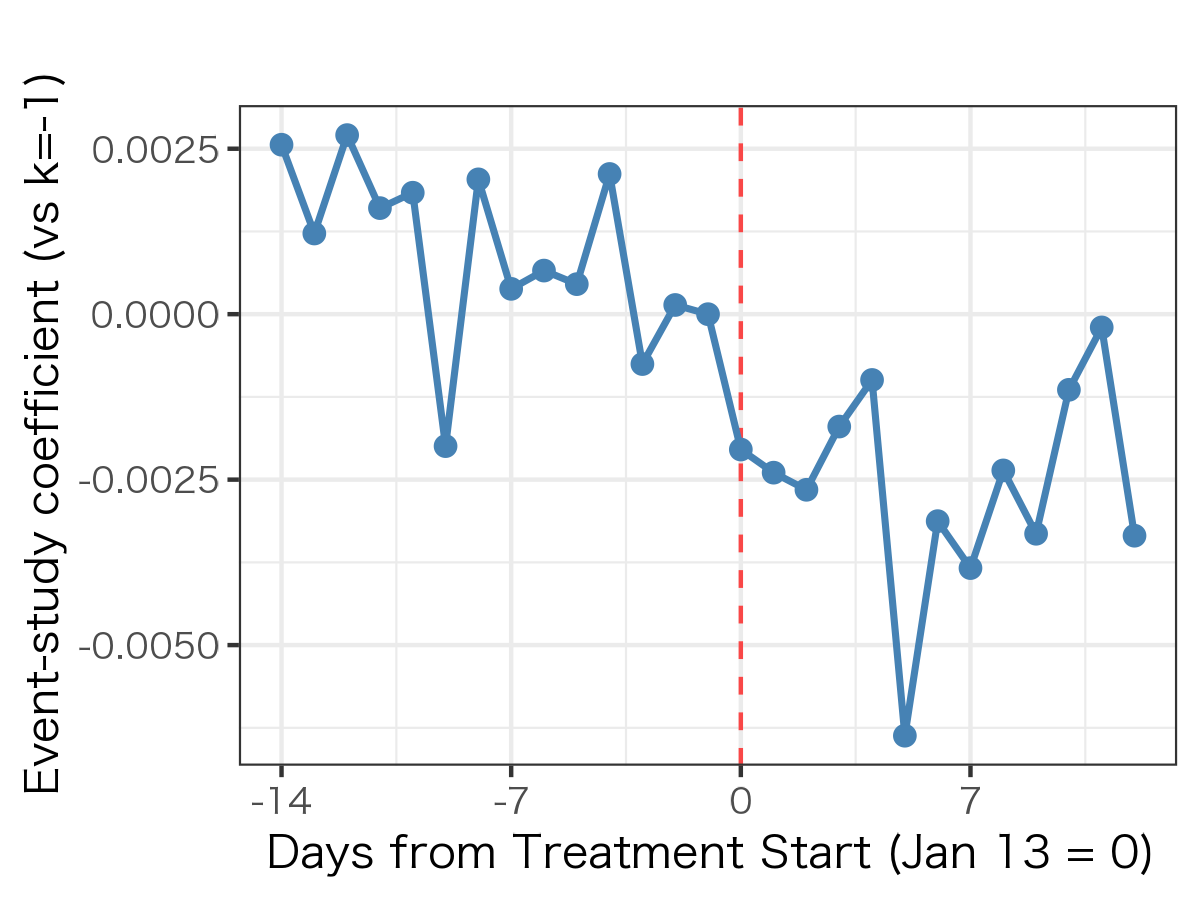}
        \subcaption{Avg Dates (Predicted)}
    \end{minipage}
    \hfill
    \begin{minipage}[b]{0.47\columnwidth}
        \centering
        \includegraphics[width=1.0\columnwidth]{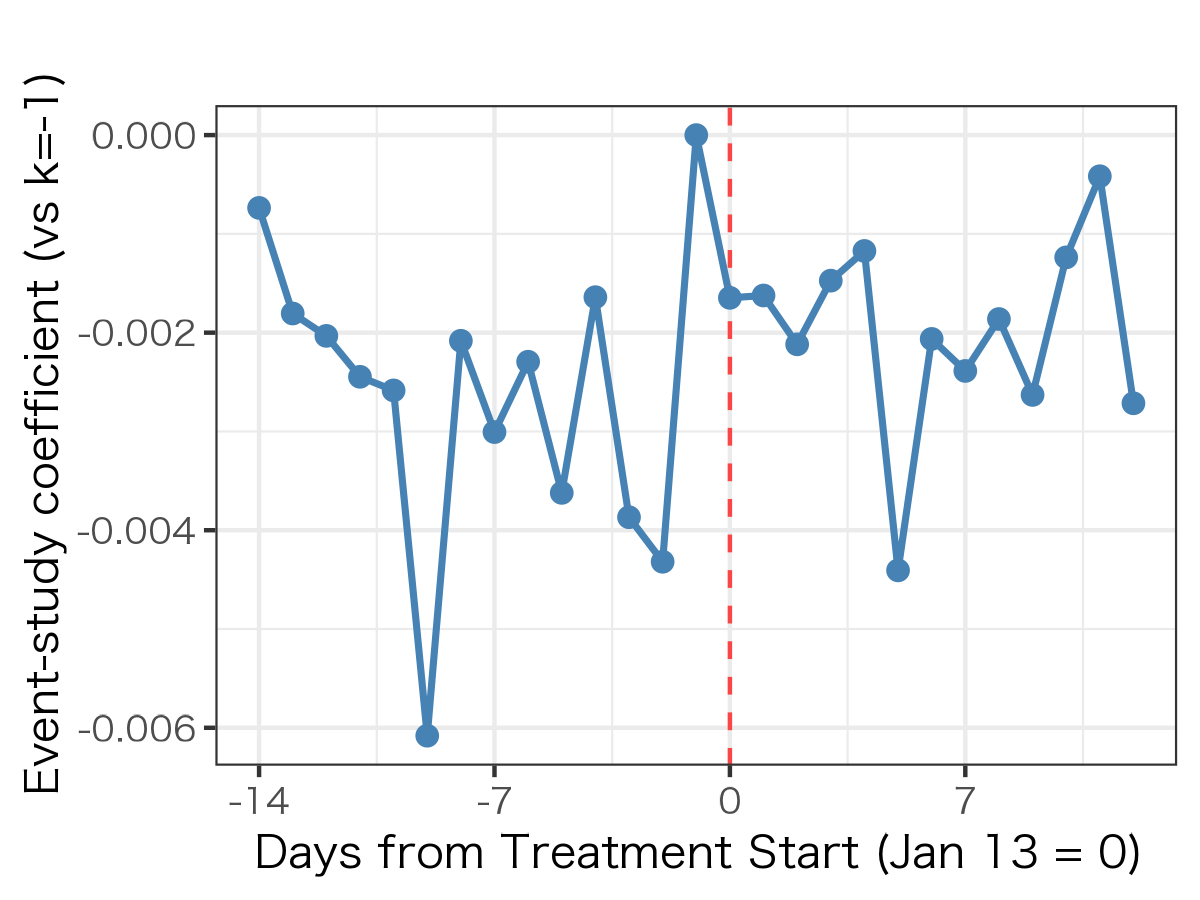}
        \subcaption{Avg Effective Dates (Predicted)}
    \end{minipage}\\
    \begin{minipage}[b]{0.47\columnwidth}
        \centering
        \includegraphics[width=1.0\columnwidth]{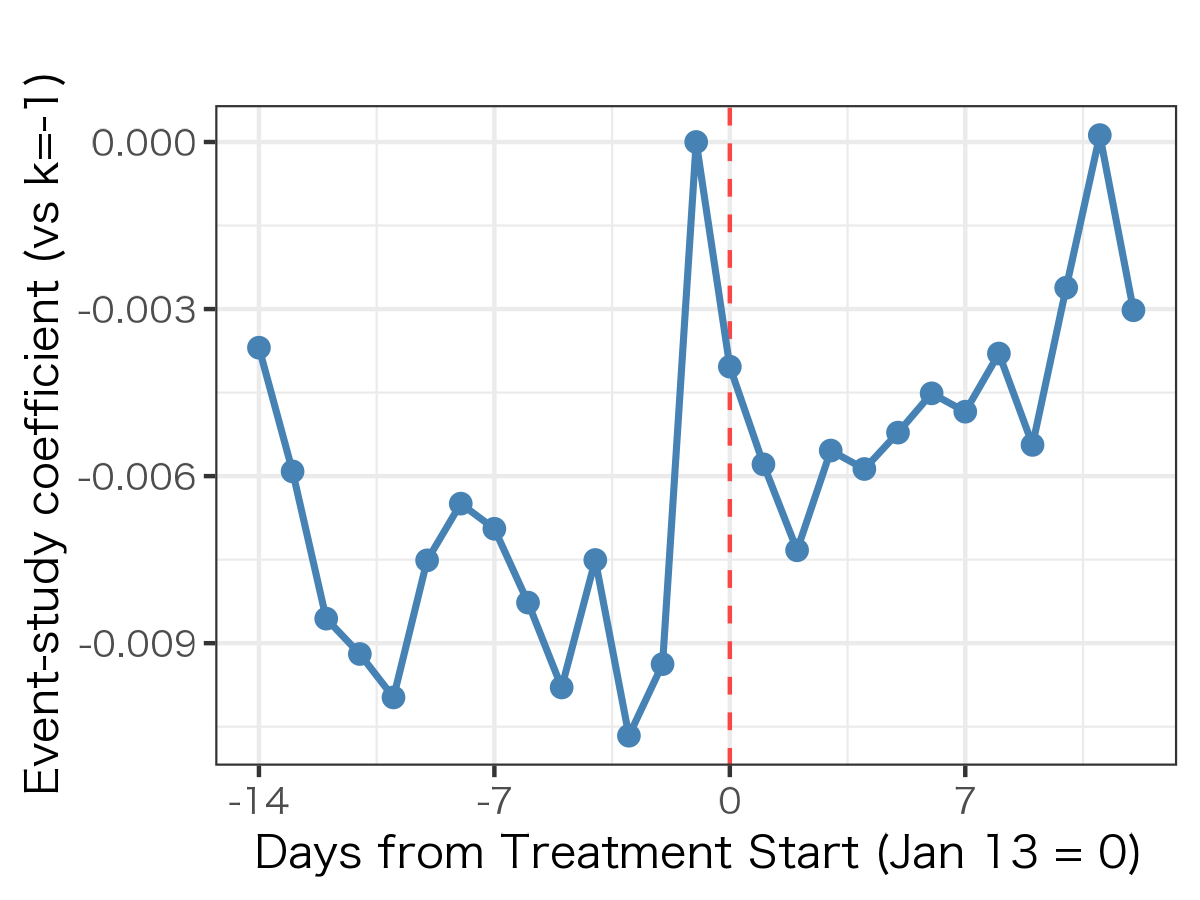}
        \subcaption{Dating Prob (Receiver, Predicted)}
    \end{minipage}
    \hfill
    \begin{minipage}[b]{0.47\columnwidth}
        \centering
        \includegraphics[width=1.0\columnwidth]{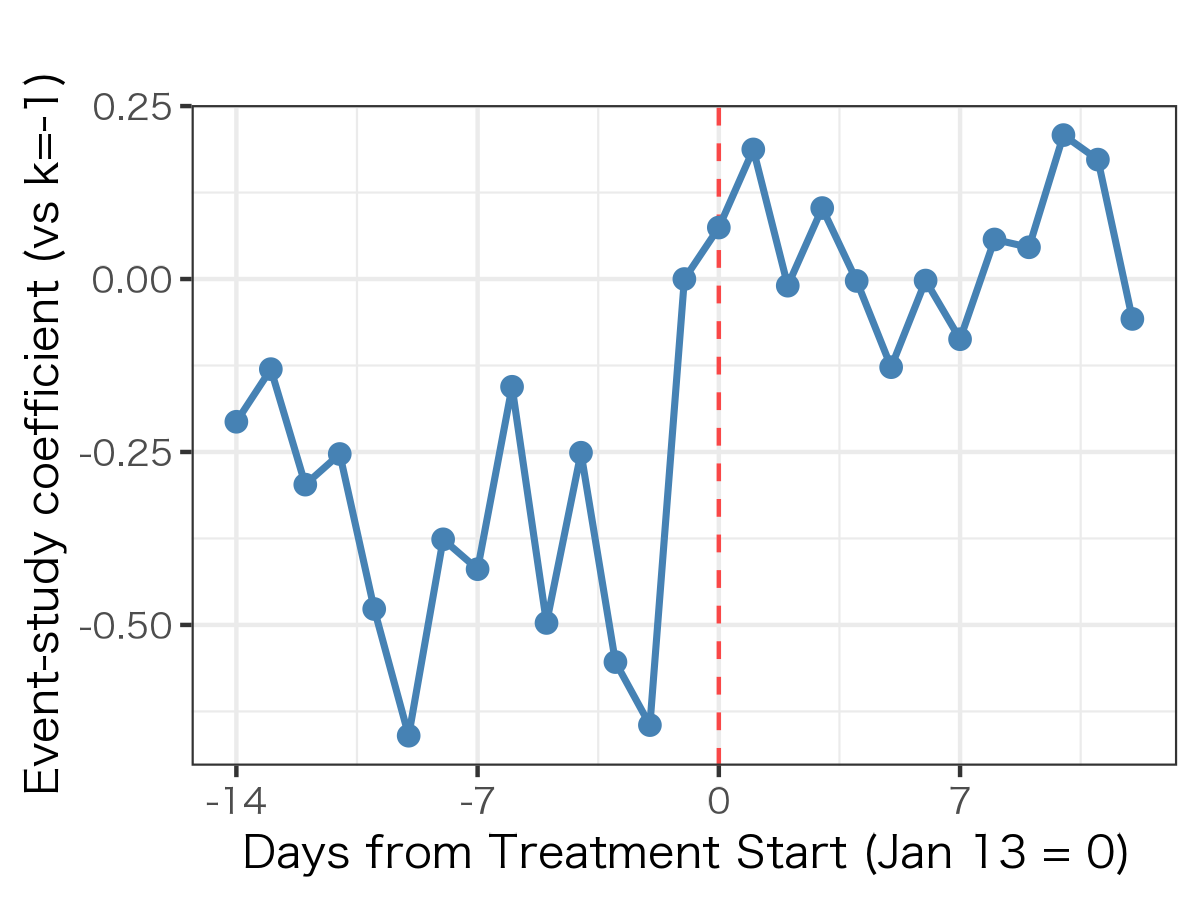}
        \subcaption{Avg Likes (Receiver, Predicted)}
    \end{minipage}\\
    \begin{minipage}[b]{0.47\columnwidth}
        \centering
        \includegraphics[width=1.0\columnwidth]{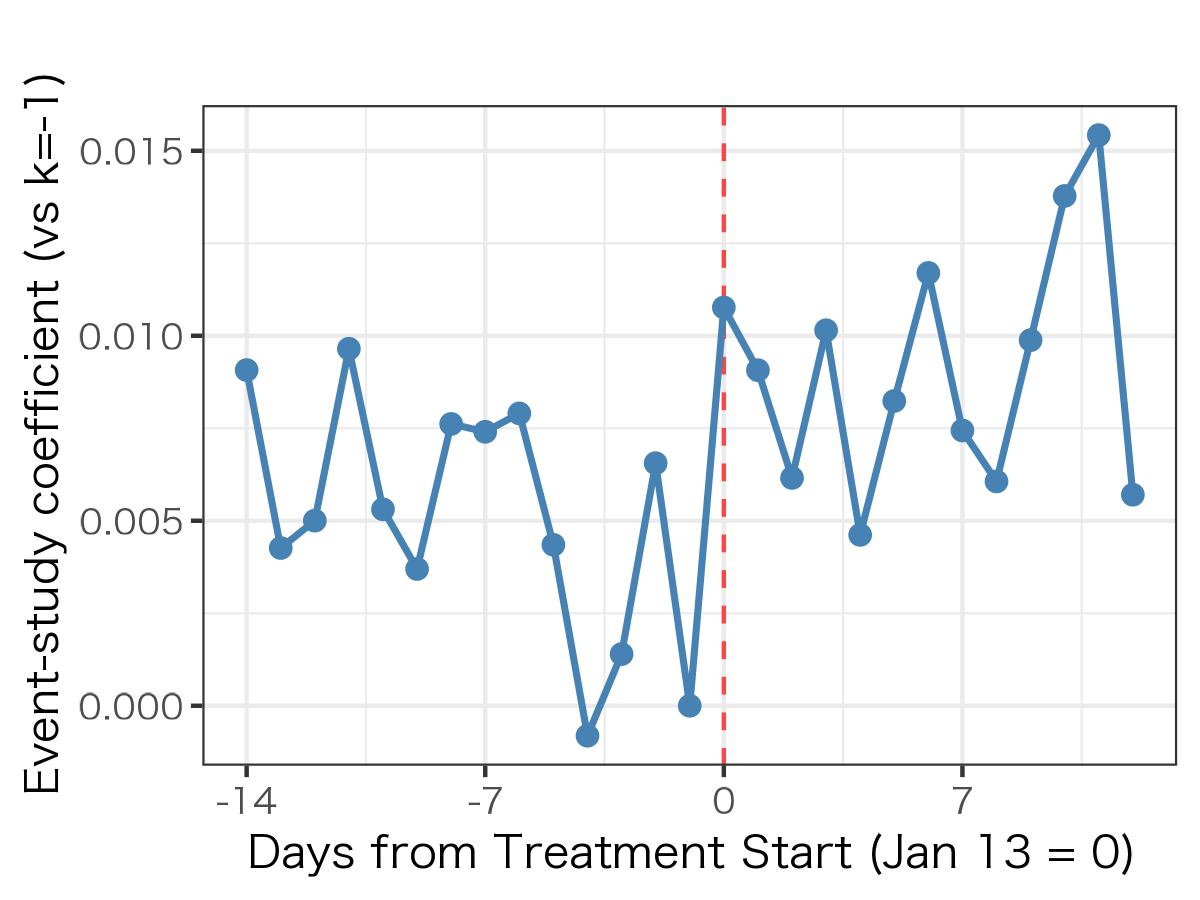}
        \subcaption{Avg Effective Dates (Realized)}
    \end{minipage}
    \hfill
    \begin{minipage}[b]{0.47\columnwidth}
        \centering
        \includegraphics[width=1.0\columnwidth]{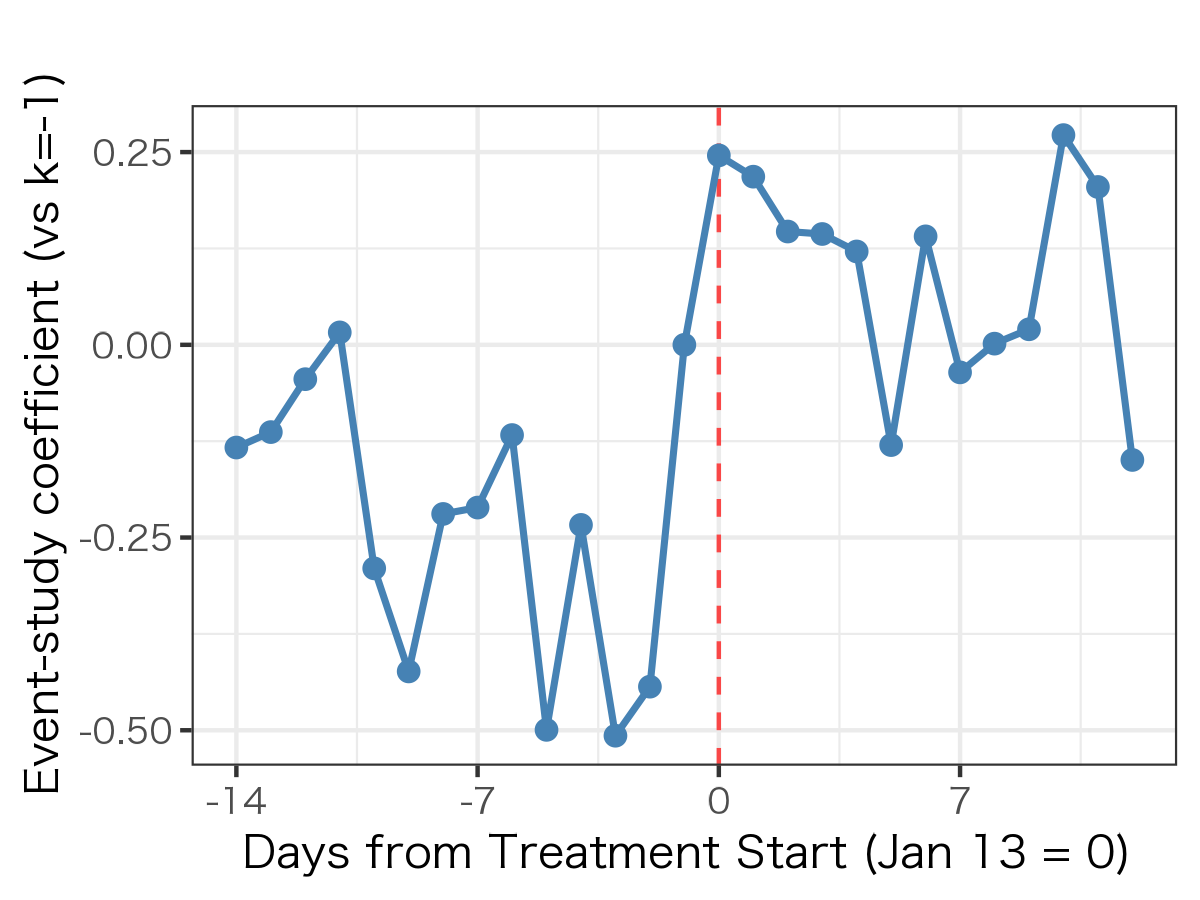}
        \subcaption{Avg Likes (Receiver, Realized)}
    \end{minipage}
\caption{Event-Study Coefficients Relative to $k=-1$}
\label{fg:event_study}
\end{center}
\begin{flushleft}\footnotesize
  Note: Each panel plots coefficients on treatment $\times$ event-time from an event-study regression with area and day fixed effects, normalized to zero at $k=-1$. The red dashed line marks treatment onset (Jan 13, 2026). Realized outcomes exclude top 0.1\% receiver-days for reasons discussed in Section \ref{sec:did_estimates}. Confidence intervals are omitted because standard inference is unreliable with only two geographic clusters.
  \end{flushleft}
\end{figure}

Figure~\ref{fg:event_study} reports coefficients from an event-study regression that interacts the treatment-area indicator with event-time dummies around the deployment date, controlling for area and day fixed effects. The coefficients are normalized to zero at $k=-1$ (the last pre-treatment day) and represent treatment--control differences relative to that baseline. Because the event study is identified using only two geographic units, standard errors are not reliable; we therefore treat the event-study plots as descriptive and use the pooled DID estimates in Table~\ref{tb:date_area_did_results_with_fe} as post-period summaries. 

After treatment onset ($k=0$), predicted average dates decline while predicted effective dates and receiver-side dating probability increase, consistent with the reallocation of recommendations away from congested receivers. The same qualitative patterns are also visible in realized average effective dates and average likes.

\subsubsection{Difference-in-Differences Estimates}\label{sec:did_estimates}


\begin{table}[t]
  \begin{center}
      \caption{Difference-in-Differences Estimates}
      \label{tb:date_area_did_results_with_fe}
      \subfloat[Predicted Outcomes at Recommendation]{
\begin{tabular}[t]{lccccc}
\toprule
  & (1) & (2) & (3) & (4) & (5)\\
\midrule
Dependent Var & Avg Dates & Avg Eff Dates & DatProb(P) & DatProb(R) & AvgLikes(R)\\
Treat $\times$ Post & -0.003*** & 0.001** & -0.001*** & 0.004*** & 0.415***\\
 & (0.001) & (0.000) & (0.000) & (0.001) & (0.054)\\
\midrule
Num.Obs. & 56 & 56 & 56 & 56 & 56\\
R2 Adj. & 0.904 & 0.787 & 0.884 & 0.974 & 0.969\\
\bottomrule
\end{tabular}
}\\
      \bigskip
      \subfloat[Realized Outcomes in Two Weeks]{
\begin{tabular}[t]{lccccc}
\toprule
  & (1) & (2) & (3) & (4) & (5)\\
\midrule
Dependent Var & Avg Dates & Avg Eff Dates & DatProb(P) & DatProb(R) & AvgLikes(R)\\
Treat $\times$ Post & -0.002 & 0.002 & -0.001 & 0.003 & 0.264***\\
 & (0.002) & (0.001) & (0.001) & (0.002) & (0.059)\\
\midrule
Num.Obs. & 56 & 56 & 56 & 56 & 56\\
R2 Adj. & 0.469 & 0.239 & 0.346 & 0.547 & 0.930\\
\bottomrule
\end{tabular}
}\\
      \bigskip
      \subfloat[Realized Post-engagement in Two Weeks]{
\begin{tabular}[t]{lcccc}
\toprule
  & (1) & (2) & (3) & (4)\\
\midrule
Dependent Var & AvgMsg(P) & AvgMsg(R) & MsgProb(P) & MsgProb(R)\\
Treat $\times$ Post & -0.002 & -0.004* & -0.001* & -0.001\\
 & (0.001) & (0.002) & (0.001) & (0.001)\\
\midrule
Num.Obs. & 56 & 56 & 56 & 56\\
R2 Adj. & 0.266 & 0.498 & 0.332 & 0.497\\
\bottomrule
\end{tabular}
}\\
  \end{center}\footnotesize
  \textit{Note}: (P) = Proposer Side, (R) = Receiver Side. Area and date fixed effects are included. Standard errors in parentheses. * $p<0.10$, ** $p<0.05$, *** $p<0.01$. Outcomes are defined in Section \ref{sec:performance_indicator}.
\end{table}

Table~\ref{tb:date_area_did_results_with_fe} reports DID estimates.
Panel (a) shows that the expected-outcome measures move in the direction predicted by the simulation and are statistically significant. In other words, up to the recommendation stage, the rollout affects the marketplace exactly as designed: it improves fairness in exposure and expected engagement, consistent with the simulation results in Section \ref{sec:simulation}. 

Panels (b)–(c) jointly test whether the recommender-driven improvements in prediction-based metrics translate into realized actions. In Panel (b), the treatment increases the volume of likes on the receiver side, indicating that the algorithm successfully improves upstream engagement and matching opportunities. However, the effect does not carry through to the final conversion margin: the estimated change in the realized date probability is small and statistically indistinguishable from zero. Panel (c) further shows an apparent tension with the upstream gains: post-engagement outcomes move in the opposite direction, with negative and in some cases statistically significant effects on the average number of messages and the probability of messaging, suggesting that even when more matches are initiated, subsequent interaction intensity does not increase and may even decline in the short run.

\begin{table}[t]
  \begin{center}
      \caption{Difference-in-Differences Estimates Excluding Top 0.1\% Receivers}
      \label{tb:date_area_did_results_with_fe_no_outlier}
      \subfloat[Realized Outcomes in Two Weeks excluding top 0.1\% receivers]{
\begin{tabular}[t]{lccccc}
\toprule
  & (1) & (2) & (3) & (4) & (5)\\
\midrule
Dependent Var & Avg Dates & Avg Eff Dates & DatProb(P) & DatProb(R) & AvgLikes(R)\\
Treat $\times$ Post & 0.003 & 0.003** & 0.002* & 0.005* & 0.334***\\
 & (0.002) & (0.001) & (0.001) & (0.002) & (0.058)\\
\midrule
Num.Obs. & 56 & 56 & 56 & 56 & 56\\
R2 Adj. & 0.347 & 0.217 & 0.347 & 0.515 & 0.929\\
\bottomrule
\end{tabular}
}\\
      \bigskip
      \subfloat[Realized Post-engagement in Two Weeks excluding top 0.1\% receivers]{
\begin{tabular}[t]{lcccc}
\toprule
  & (1) & (2) & (3) & (4)\\
\midrule
Dependent Var & AvgMsg(P) & AvgMsg(R) & MsgProb(P) & MsgProb(R)\\
Treat $\times$ Post & 0.000 & 0.000 & 0.000 & 0.000\\
 & (0.001) & (0.002) & (0.001) & (0.001)\\
\midrule
Num.Obs. & 56 & 56 & 56 & 56\\
R2 Adj. & 0.164 & 0.416 & 0.162 & 0.449\\
\bottomrule
\end{tabular}
}
  \end{center}\footnotesize
  \textit{Note}: (P) = Proposer Side, (R) = Receiver Side. Area and date fixed effects are included. Standard errors in parentheses. * $p<0.10$, ** $p<0.05$, *** $p<0.01$. Outcomes are defined in Section \ref{sec:performance_indicator}.
\end{table}

To probe whether this discrepancy is driven by a small set of receivers with exceptionally high dating rates, Table~\ref{tb:date_area_did_results_with_fe_no_outlier} repeats the analysis after excluding the top 0.1\% of receiver-day observations.\footnote{The qualitative conclusions are robust to alternative thresholds: we obtain similar results when excluding the top 0.5\% or 1\% of receiver-day observations. We report the 0.1\% threshold as it is the most conservative and minimal exclusion that yields coherent treatment effects.} This type of tail-trimming exercise is common in matching marketplace evaluations \citep[e.g.,][]{manshadi2023redesigning}. Importantly, we do not exclude these top receiver-day observations in the baseline DID specification: Panels (b) and (c) in Table~\ref{tb:date_area_did_results_with_fe} report the full-sample effects, while Table~\ref{tb:date_area_did_results_with_fe_no_outlier} uses trimming only as a secondary decomposition to assess whether extreme right-tail congestion can distort aggregate averages. The results become more coherent with the mechanism: Panel (a) shows statistically significant positive effects on average effective dates, dating probabilities, and average likes, while the coefficient on average dates is positive but not statistically significant; Panel (b) shows no meaningful effect on post-engagement outcomes. In other words, once we remove the top receivers who likely face severe congestion and crowding-out, the recommender improves most early-stage outcomes for the remaining 99.9\% of receivers, but does not materially change post-date engagement. This pattern suggests that the recommender improves outcomes up to the formation of mutual likes, but that further improvements in substantive match quality---such as sustained messaging and interaction---are not well captured in our data and may require separate tuning or design changes beyond the ranking algorithm.


\begin{figure}[tp]
    \begin{center}
    \begin{minipage}[b]{0.47\columnwidth}
        \centering
        \includegraphics[width=1.0\columnwidth]{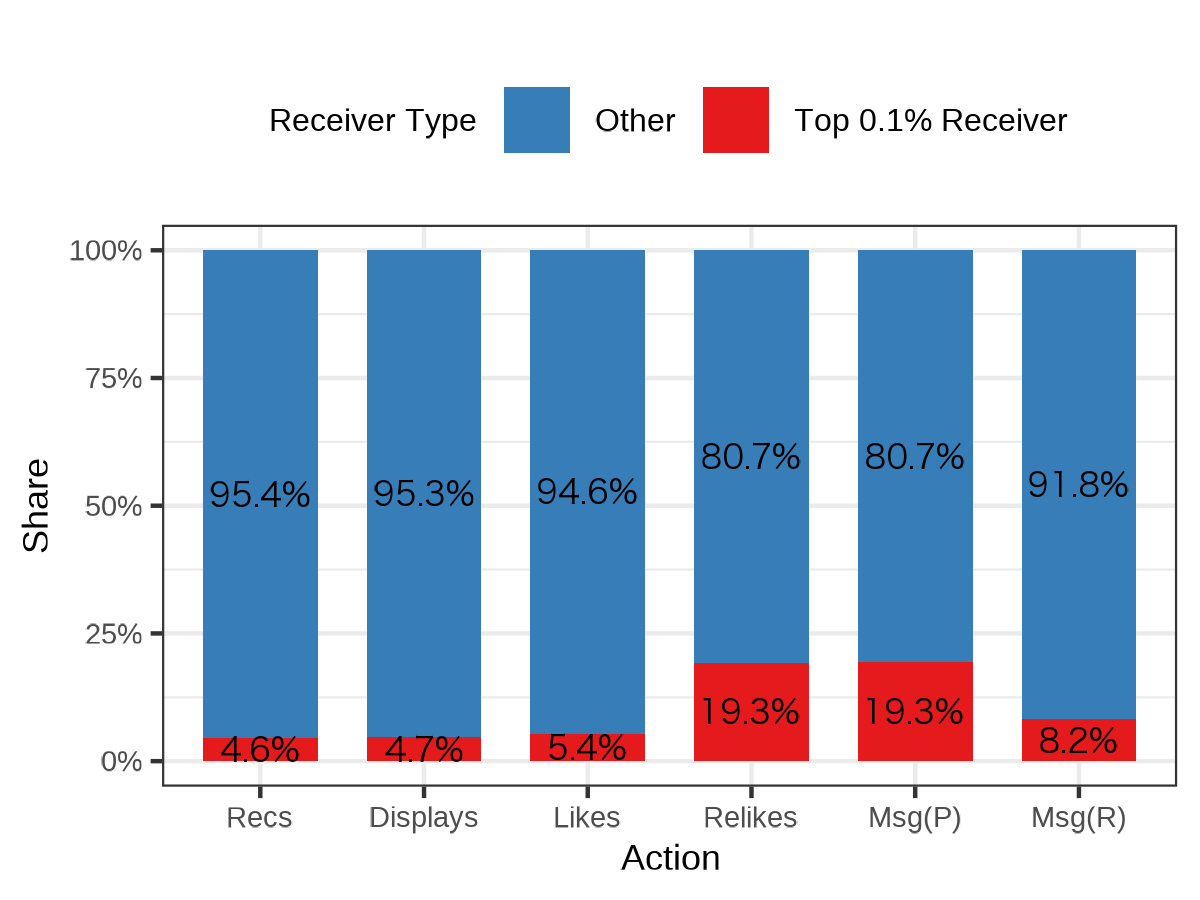}
        \subcaption{Share of Activity by Receiver Type across Funnel Stages}
    \end{minipage}
    \hfill
    \begin{minipage}[b]{0.47\columnwidth}
        \centering
        \includegraphics[width=1.0\columnwidth]{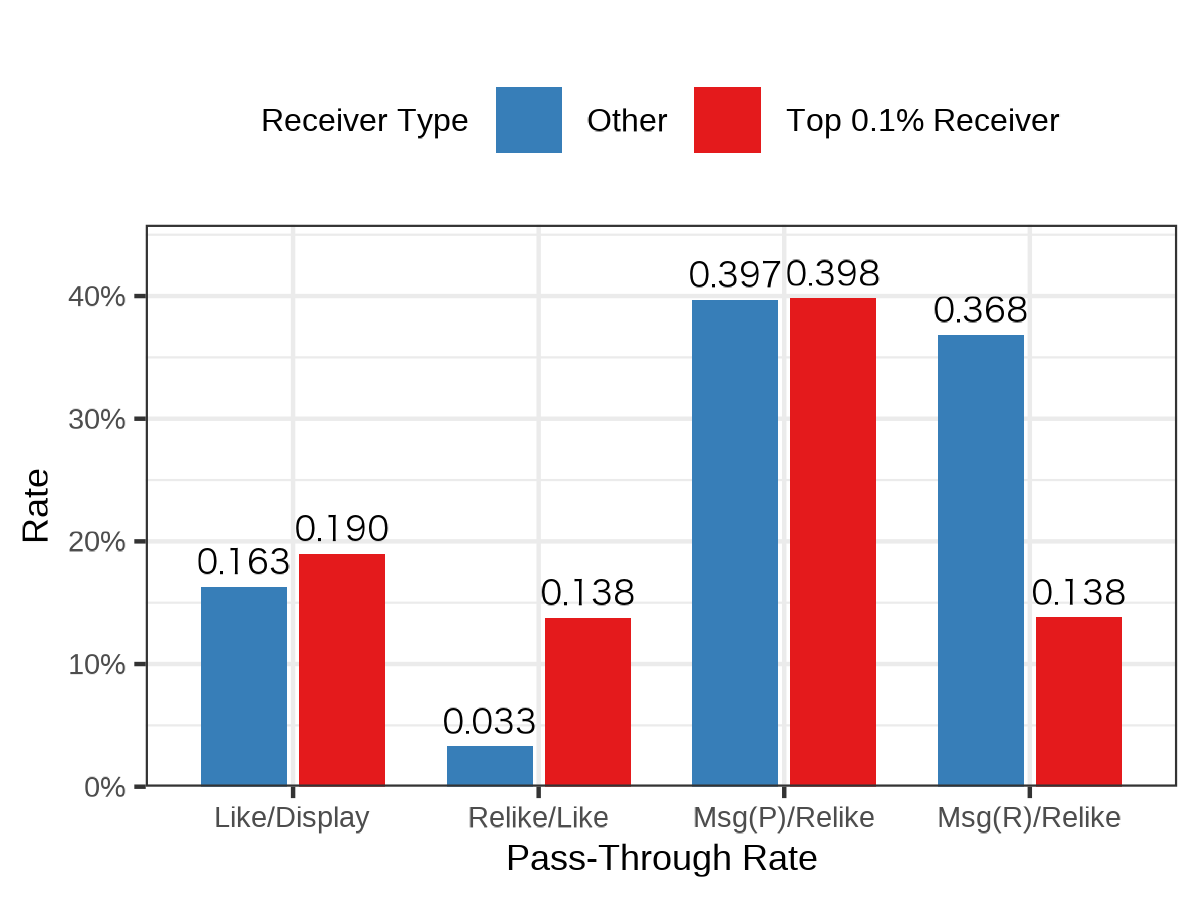}
        \subcaption{Pass-Through Rates across Stages of the Matching Funnel by Receiver Type}
    \end{minipage}
\caption{Activities and Pass-Through Rates of Top 0.1\% Receivers and Others (Kanto Pre-Period)}
\label{fg:outlier_receiver_passthrough_kanto_pre}
\end{center}
\end{figure}

Figure \ref{fg:outlier_receiver_passthrough_kanto_pre} illustrates how a very small group of receivers concentrates attention and downstream interactions in the market. Panel (a) shows that the top 0.1\% of receivers attract a disproportionate share of early-stage exposure: in the Kanto pre-period, they receive 4.6\% of all recommendations (and similarly elevated shares of displays and likes), indicating substantial congestion on the receiver side even before any treatment. 

Panel (b) then examines pass-through along the matching funnel. Top receivers are characterized by two striking features. First, they exhibit exceptionally high relike rates, which mechanically attracts a disproportionate volume of recommendations and likes relative to other receivers. Second, despite this strong early-stage responsiveness, they exhibit exceptionally low post-engagement activity. Post-engagement pass-through deteriorates sharply: message exchange conditional on the receiver's relike (i.e., date formation) is roughly one-third of that for other receivers. Taken together, these patterns indicate that extreme receiver-side congestion arises from the combination of unusually high responsiveness at the relike stage and weak follow-through in communication, allowing top receivers to dominate matches through the message-sending stage, yet preventing many of those matches from translating into substantive interaction.

Our algorithm is designed precisely to mitigate this early-stage congestion by reallocating recommendation exposure away from the extreme right tail and toward the broader receiver population, thereby improving match formation for the remaining 99.9\% of users.

\section{Conclusions}
We study integrator design for two-sided recommendation in online dating, where successful matches require mutual interest, activity, and responsiveness on both sides. Using production-grade predictions from CoupLink, we show that conventional one-sided ranking concentrates recommendations on a small subset of receivers, generating severe congestion and overstating efficiency when evaluated by raw dates. To address this, we introduce effective dates, a congestion-adjusted outcome metric, and propose \ECDA, which limits receiver exposure in terms of expected likes or dates rather than headcount. Across calibrated simulations and a large empirical market, \ECDA with dating-rate sorting consistently achieves higher effective dates and dating probability per receiver. Performance gains arise primarily from imposing loose exposure constraints that bind only for receivers with exceptionally high dating rates, thereby mitigating congestion without broadly restricting recommendations. A large-scale field experiment confirms that these expected effects are realized in practice. Post-engagement outcomes (messaging) do not materially change, suggesting that \ECDA improves match formation without degrading subsequent interaction quality.

From a practical standpoint, our results highlight exposure control as a central design lever in two-sided platforms. \ECDA is simple, scalable, and well-suited for production deployment: With dating-rate sorting, it admits fast greedy implementation, and its capacity parameter has a clear interpretation that allows conservative tuning. Field evidence further suggests that excessive congestion is driven by receivers with exceptionally high relike rates rather than by differences in general attractiveness or activity, implying that exposure rebalancing can simultaneously improve equity and expand opportunities for the broader user base. More broadly, the framework developed here, i.e., combining congestion-adjusted metrics with exposure-based constraints, extends naturally to other two-sided platforms, such as hiring and gig markets, where recommendations shape attention and downstream frictions make raw matches an incomplete measure of success.

\bibliographystyle{ecta}
\bibliography{references}

\newpage
\appendix

\section{Derivation of Effective Dating Rates}\label{sec:effective_dates_derivation}

We assume that whether a proposer--receiver pair forms a date is determined independently across pairs.
Conditional on forming dates, we assume that a receiver who has dated multiple proposers selects exactly one proposer uniformly at random to develop a relationship.
We refer to a date that is selected in this manner as an \emph{effective date}.

Let $\delta_{ij}^\dag$ denote the probability that proposer $i$ and receiver $j$ form an effective date when $j$ is recommended to $i$.
Let $X_{j,-i}$ be the random variable representing the number of dates that receiver $j$ forms with proposers other than $i$.
Then $X_{j,-i}$ follows a Poisson binomial distribution with success probabilities $(\delta_{kj} M_{kj})_{k \neq i}$.
An effective date between $i$ and $j$ occurs if and only if (i) $i$ and $j$ form a date, and (ii) proposer $i$ is selected among the $X_{j,-i}+1$ dated proposers. The events (i) and (ii) are assumed to be independent, the event (i) occurs with probability $\delta_{ij}M_{ij}$, and the event (ii) occurs with probability $1/(1 + X_{j,-i})$.
Therefore,
\begin{equation}
\delta_{ij}^\dag M_{ij} = \delta_{ij} M_{ij} \mathbb{E}\!\left[\frac{1}{1 + X_{j,-i}}\right].
\end{equation}

Using Le Cam's theorem for Poisson binomial distributions, $X_{j,-i}$ can be well approximated by a Poisson random variable with parameter
\begin{equation}
\mu_{j,-i} = \sum_{k \neq i} \delta_{kj} M_{kj}.
\end{equation}
The approximation error is of order $O(\sum_{k \neq i} \delta_{kj}^2 M_{kj}^2)$, which is negligible in our setting because $\delta_{kj}$ is typically very small.

If a random variable $X$ follows a Poisson distribution with parameter $\mu$, then
\begin{equation}
\mathbb{E}\left[\frac{1}{1+X}\right] = \sum_{k=0}^\infty \frac{1}{1+k}\frac{\mu^k e^{-\mu}}{k!} = \frac{1}{\mu}\sum_{k = 1}^\infty \frac{\mu^k e^{-\mu}}{k!} = \frac{1-e^{-\mu}}{\mu}.
\end{equation}
Combining these results yields the approximation
\begin{equation}
\delta_{ij}^\dag\approx\frac{1-e^{-\mu_{j,-i}}}{\mu_{j,-i}}\delta_{ij}.
\end{equation}

Finally, since $\delta_{ij} M_{ij}$ is typically much smaller than
$\mu_{j,-i}$, we approximate $\mu_{j,-i}$ by
\begin{equation}
\mu_j \coloneqq \sum_{k \in [I]} \delta_{kj} M_{kj}
= \mu_{j,-i} + \delta_{ij} M_{ij}.
\end{equation}
This leads to
\begin{equation}
\delta_{ij}^\dag \approx \frac{1-e^{-\mu_j}}{\mu_j}\delta_{ij} \eqqcolon \delta_{ij}^*.
\end{equation}

\section{Outcome Definitions}\label{sec:outcome_definitions}
Outcomes are computed at the area $\times$ day level to align with CoupLink's daily recommendation refresh cycle.

\subsection{Predicted outcomes at recommendation}
These metrics are computed from model predictions at the time recommendations are generated. Other than restricting the sets of proposers and receivers to active users, the definitions coincide with those in the simulation, presented in Section~\ref{sec:performance_indicator}.
\begin{description}
\item[Average Dates:] The average number of predicted dates per active proposer.
\item[Average Effective Dates:] The average number of predicted effective dates per active proposer.
\item[Average Likes (Receiver):] The average number of predicted likes received per receiver.
\item[Dating Probability (Proposer/Receiver):] The predicted fraction of active proposers (receivers) obtaining at least one dating opportunity.
\end{description}

Note that average dates and average likes are defined on a per-proposer and per-receiver basis, respectively. Since the corresponding totals are fixed at the market level, computing these averages using the population size on the opposite side would simply rescale the values by a constant factor and would not affect their relative magnitudes or comparisons across integrators.

\subsection{Realized outcomes}
These metrics track realized behavior observed within two weeks after the recommendation:
\begin{description}
\item[Average Dates:] The average number of realized dates (i.e., proposer--receiver pairs who send likes mutually) per active proposer.
\item[Average Effective Dates:] The average number of expected effective dates, conditional on realization of dates. Since we cannot observe which date becomes an effective date (i.e., which proposer--receiver pair develops a deeper relationship), we assume that, for each receiver and each day, one of the realized dates is selected uniformly at random to become an effective date. Under this assumption, the expected number of effective dates for proposer $i$ conditional on all date events is given by
\begin{equation}
    \sum_{j \in [J]}\frac{\mathbbm{1}\{\text{$i$ and $j$ had a date}\}}{\#\{i' \in [I]: \text{ $i'$ and $j$ had a date}\}},
\end{equation}
and average effective dates is defined as the proposer-side expectation of this quantity. Under this realized definition, the metric is mechanically linked to receiver-side dating probability because both depend on whether each receiver has at least one realized date on a given day; they differ mainly by normalization and proposer-side allocation.
\item[Average Likes (Receiver):] The average number of likes actually received per receiver.
\item[Dating Probability (Proposer/Receiver):] The fraction of active proposers (receivers) who obtained at least one realized date.
\end{description}

\subsection{Post-engagement outcomes}
These metrics measure engagement after a date is formed:
\begin{description}
\item[Average Message (Proposer/Receiver):] The average number of messages sent by proposers (receivers) within two weeks after the recommendation.
\item[Message Probability (Proposer/Receiver):] The fraction of proposers (receivers) who sent at least one message within two weeks after the recommendation.
\end{description}

\section{Proofs}\label{sec: proofs}

\subsection{Proof of Theorem~\ref{thm: DA is equivalent to greedy}}

\begin{proof}
    The output of \DA is independent of the order in which proposers make proposals. We sort proposer--receiver pairs in descending order of dating rates $\delta_{ij}$ and sequentially select proposers according to this order. By construction, when proposer $i$ is called because it is the turn of pair $(i, j)$, proposer $i$ indeed proposes receiver $j$ because (i) proposer $i$ has not been rejected by receiver $j$, and (ii) proposer $i$ has already proposed receivers preferred to $j$ (i.e., for all $k \in [J]$ such that $\delta_{ik} > \delta_{ij}$). Since proposals are made in descending order of dating rates, receivers also prefer earlier proposals. Accordingly, receivers accept any proposals until their capacity becomes full, and reject all proposals afterward. As all ``tentative'' matches are final, the output of \DA is equivalent to the greedy algorithm, which matches pairs in descending order of dating rates, $\delta_{ij}$.
\end{proof}

\subsection{Proof of Theorem~\ref{thm: ECDA is equivalent to greedy}}
\begin{proof}
Under \ECDA, each proposer $i$ proposes to receiver $j$ in descending order of dating rates $\delta_{ij}$. When it is the turn of a pair $(i,j)$, if proposer $i$ still has residual cognitive capacity and has not been (even partially) rejected by receiver $j$, then proposer $i$ proposes to $j$ an amount
\begin{equation}
    p_{ij}=\min\{1, r_i\},
\end{equation}
where $r_i$ denotes $i$'s residual proposer capacity at that moment.

We prove by induction that after processing any prefix of pairs, the tentative matching produced by \ECDA coincides with the greedy allocation restricted to that prefix.
The base case is immediate since both start from the zero matrix.

For the induction step, let $(i,j)$ be the next pair in descending order of $\delta_{ij}$.
By the induction hypothesis, the current tentative matching equals the greedy allocation on all earlier pairs, so the residual capacities of proposer $i$, denoted by $r_i$, and receiver $j$, denoted by $b_j$, are
\begin{align}
    r_i &= c_i - \sum_{k\in[J]} M_{ik},\\
    b_j &= q_j - \sum_{l\in[I]} w_{l j} M_{l j},
\end{align}
where $M$ denotes the current tentative matching.

Because we process proposals in the global order $\succ$, any proposer who arrives after $(i,j)$ is weakly less preferred by $j$ than $i$.
Therefore, accepting the proposal from $i$ cannot trigger any future replacement of a previously accepted (more preferred) proposer by a less preferred one.
Hence, at this step receiver $j$ accepts proposer $i$ up to the remaining exposure budget $b_j$.
Formally, the accepted amount for $(i,j)$ is
\begin{equation}
    M_{ij} = \min\left\{1,\ r_i,\ \frac{b_j}{w_{ij}}\right\},
\end{equation}
This is exactly the greedy update rule in the theorem statement.
After setting $M_{ij}$ and updating $r_i \leftarrow r_i-M_{ij}$ and $b_j \leftarrow b_j-w_{ij}M_{ij}$, the tentative matching again coincides with the greedy allocation on the enlarged prefix.

By induction, after all pairs are processed, the final output of \ECDA coincides with the greedy algorithm, completing the proof.
\end{proof}

\end{document}